\newtheorem{theorem}{Theorem}
\newtheorem{lemma}[theorem]{Lemma}
\theoremstyle{remark}
\newtheorem{remark}{Remark}
\newcommand{\ket}[1]{|#1\rangle}
\newcommand{\bra}[1]{\langle#1|}
\newcommand{\tr}{\text{\rm tr}}
\newcommand{\Ent}{\text{\rm Ent}}
\newcommand{\rank}{\text{\rm rank}}
\newcommand{\cH}{\mathcal{H}}
\newcommand{\cK}{\mathcal{K}}
\newcommand{\cI}{\mathcal{I}}
\newcommand{\dd}{\text{\rm{d}}}
\newcommand{\bL}{\mathcal{B}}
\newcommand{\cL}{\mathcal{L}}
\newcommand{\bfs}{\mathbf{s}}
\title{Improved Quantum Hypercontractivity Inequality for the Qubit Depolarizing Channel} 
\author{Salman Beigi}
\affil{\it \small School of Mathematics, Institute for Research in Fundamental Sciences (IPM)\\ \it \footnotesize P.O.~Box 19395-5746, Tehran, Iran}
\date{December 9, 2021}
\begin{document}

\maketitle

\begin{abstract}
The hypercontractivity inequality for the qubit depolarizing channel $\Psi_t$ states that $\|\Psi_t^{\otimes n}(X)\|_p\leq \|X\|_q$ provided that $p\geq q> 1$ and $t\geq \ln \sqrt{\frac{p-1}{q-1}}$. In this paper we present an improvement of this inequality. We first prove an improved quantum logarithmic-Sobolev inequality and then use the well-known equivalence of logarithmic-Sobolev inequalities and hypercontractivity inequalities to obtain our main result. As applications of these results, we present an asymptotically tight quantum Faber-Krahn inequality on the hypercube, and a new quantum Schwartz-Zippel lemma. 

\end{abstract}

\section{Introduction}\label{sec:intro}
Hypercontractivity inequalities express a bound on the norm of a ``noisy" version of a function in terms of the norm of the function itself. When the noise is Markovian and the noise operator belongs to a continuous semigroup, hypercontractivity inequalities are proven using the log-Sobolev inequalities. Such inequalities bound the entropy of a function in terms its \emph{energy}, called the Dirichlet form. Hypercontractivity and log-Sobolev inequalities~\cite{G75b, Nelsen, SH-K72, Beckner, Bonami} have found several applications, e.g., in concentration of measure inequalities,~\cite{BLM13, RS13} transportation cost inequalities,~\cite{GL10} estimating the mixing times,~\cite{DSC96} analysis of Boolean functions~\cite{deWolf08} and strong converse bounds in information theory.~\cite{AG76, KA12}

Hypercontractivity and log-Sobolev inequalities can be  generalized to the (non-commutative) quantum case.~\cite{Zegar, KT13} In this case, the noise operator is a quantum superoperator and belongs to a quantum Markov semigroup. As in the classical case, quantum hypercontractivity and log-Sobolev inequalities have found  several applications (see, e.g.,~\cite{KT13, DB14, KT2, CKMT15, BDR20, Bardetetal19, CDR19, BJLRS21}). Nevertheless, due to the non-commutative nature of the quantum theory, proving quantum hypercontractivity and log-Sobolev inequalities is usually more challenging comparing to their classical counterparts.

An important quantum superoperator is the \emph{qubit} depolarizing channel
defined by
$$\Psi_t(X) = e^{-t}X + (1-e^{-t}) \tr(X) \frac{I}{2}, \qquad t\geq 0.$$
Here, the superoperators are parametrized in such a way that they satisfy $\Psi_s\circ \Psi_t= \Psi_{s+t}$ and form a semigroup. 
The hypercontractivity inequalities for this quantum Markov semigroup, first proven in~\cite{MO10, King14}, take the form 
\begin{align}\label{eq:NC-HC}
\|\Psi_t^{\otimes n}(X)\|_p\leq \|X\|_q, \qquad  \text{ if }\qquad e^{2t}\geq \frac{p-1}{q-1}.
\end{align}
Here, it is assumed that $p, q> 1$ and 
$$\|X\|_p = \Big( \frac {1}{2^n} \tr(|X|^p) \Big)^{\frac 1p},$$
where $|X|=\sqrt{X^\dagger X}$. This inequality in the classical (commutative) case is a well-known hypercontractivity inequality which takes the same form, but $X$ is restricted to be diagonal in the computational basis (see, e.g.,~\cite{deWolf08}). Moreover, this inequality (without imposing any assumption on $X$) is tight even if we restrict $X$ to be diagonal in the computational basis.

The proof of inequality~\eqref{eq:NC-HC} for $n=1$ is a simple consequence of its classical counterpart. Nevertheless, unlike the classical case, its proof for arbitrary $n$ is highly non-trivial. 
In the commutative case, having~\eqref{eq:NC-HC} for $n=1$, its proof for arbitrary $n$ is immediate using the multiplicativity of the operator norm, or using a certain subadditivity of the entropy function. In the non-commutative case, the latter two properties do not hold in general, so proving~\eqref{eq:NC-HC} for arbitrary $n$ needs new tools. In~\cite{MO10, King14} this inequality is proven using an inequality on the norms of $2\times 2$ block matrices by King.~\cite{King03}

Our main result in this paper is an improvement of~\eqref{eq:NC-HC}. Our result is inspired by and is a quantum generalization of the works of Polyanskiy and Samorodnitsky.~\cite{Samor, PSamor} Again, our result in the case of $n=1$ is easily derived from~\cite{Samor, PSamor}. Then, to take the induction step we use an entropic inequality of~\cite{BDR20}, that itself is based on~\cite{King03}. 

In the following section after introducing some notations, we formally state our main results in Theorem~\ref{thm:main-LS} and Theorem~\ref{thm:main-HC}. Next, in Section~\ref{sec:application} we present some applications. In particular, we present an asymptotically tight quantum Faber-Krahn inequality on the hypercube. Moreover, recalling the Fourier expansion of operators in the Pauli basis and the form of $\Psi_t^{\otimes n}$ in that basis, we establish an improved quantum Schwartz-Zippel lemma that puts a bound on the ``degree" of an operator in terms of its rank. The proofs of the main theorems, after some auxiliary lemmas in Section~\ref{sec:aux}, come in Sections~\ref{sec:-HC} and~\ref{sec:proof-LS}. We conclude with some final remarks in Section~\ref{sec:final}.

\section{Main results} 
Throughout the paper, we use $\cH=\mathbb C^2$. We let $\mathcal B(\cH^{\otimes n})$ be the space of linear operators acting $\cH^{\otimes n}$, and for $X\in \cH^{\otimes n}$ let 
$$\tau(X)=\frac{1}{2^n}\tr(X),$$
be the \emph{normalized} trace. The Hilbert-Schmidt inner product on $\bL(\cH^{\otimes n})$ is given by
$$\langle X, Y\rangle := \tau(X^{\dagger} Y).$$

Let $\cL:\mathcal{B}(\cH)\rightarrow \mathcal{B}(\cH)$ be the \emph{Lindblad generator}
$$\cL(X)=X- \tau(X) I.$$
Then, using $\cL^2=\cL$, the depolarizing channel equals
$$\Psi_t(X) = e^{-t\cL}(X) = e^{-t}X + (1-e^{-t}) \tau(X)I.$$
The Lindblad generator associated to the $n$-fold tensor product superoperator $\Psi_t^{\otimes}$ is equal to 
\begin{align}\label{eq:K}
\cK_n = \sum_{i=1}^n \widehat \cL_i,
\end{align}
where
\begin{align}\label{eq:def-hat-L}
\widehat \cL_i:= \cI^{\otimes (i-1)}\otimes \cL\otimes \cI^{\otimes (n-i)}.
\end{align} 
Here $\cI:\mathcal{B}(\cH)\rightarrow \bL(\cH)$ is the identity superoperator. Indeed, we have $e^{-t\cK_n} = \Psi_t^{\otimes n}$.

The \emph{entropy} of a positive semidefinite $X\in \bL(\cH^{\otimes n})$ is given by
$$\Ent(X) = \tau(X\ln X) - \tau(X) \ln \tau(X).$$
The term entropy for this function is used since it is related to the quantum relative entropy; if $X=2^n\rho$, where $\rho$ is a density matrix, then $\Ent(X) =\tr(\rho \ln \rho) +\ln 2^n =D(\rho\| 1/2^n I) $, where  $D(\rho\| 1/2^n I)$ is the quantum relative entropy between $\rho$ and the maximally mixed state.

The Schatten $p$-norm is defined by
$$\|X\|_p := \big(  \tau(|X|^p) \big)^{\frac 1p},$$
where as before $|X|= \sqrt{X^{\dagger}X}$.
The appearance of the entropy function in the study of hypercontractivity inequalities stems from the fact that the derivative of the norm function is given in terms of entropy.

\begin{lemma}\label{lem:entropy-derivative}
Let $X$ be a positive semidefinite matrix. Then the followings hold:
\begin{itemize}
\item[{\rm (i)}] We have
$$\frac{\dd}{\dd p}  \|X\|_p = \frac{1}{p^2 \|X\|_p^{p-1}}\Ent(X^p).$$

\item[{\rm (ii)}]
Let $f(t) = \|\Psi_t^{\otimes n}(X)\|_{p(t)} $. Then we have 
$$f'(t) = \frac{1}{\gamma}\Big(\,\frac{p'(t)}{p(t)^2} \Ent(Y^2) - \big\langle Y^{2/p(t)}, \cK_n Y^{2-2/p(t)}     \big\rangle\,\Big),$$
where $Y=\big( \Psi_{t}^{\otimes n}(X) \big)^{p(t)/2}$ and $\gamma = \big\|\Psi_{t}^{\otimes n}(X)\big\|_{p(t)}^{p(t)-1}$.
\item[{\rm (iii)}] For any $s>1$ we have
$$\frac{\Ent(X^s)}{\tau(X^s)}\geq \frac{\ln \|X\|_s - \ln \|X\|_1}{1-1/s}.$$

\end{itemize}
\end{lemma}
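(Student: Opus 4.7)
The plan is to dispatch the three parts in order; each reduces to a short manipulation plus one standard fact (for (iii), convexity of $s\mapsto\ln\tau(X^s)$).

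For (i), I would start from $\|X\|_p=\bigl(\tau(X^p)\bigr)^{1/p}$ and take logs, so that $\ln\|X\|_p = \frac{1}{p}\ln\tau(X^p)$. Using the spectral decomposition of the PSD matrix $X$, one has $\frac{\dd}{\dd p}\tau(X^p)=\tau(X^p\ln X)$. Differentiating gives
\begin{equation*}
\frac{\dd}{\dd p}\ln\|X\|_p \;=\; -\frac{1}{p^2}\ln\tau(X^p)+\frac{1}{p}\cdot\frac{\tau(X^p\ln X)}{\tau(X^p)}\;=\;\frac{\Ent(X^p)}{p^2\,\tau(X^p)},
\end{equation*}
where in the last equality I used $\Ent(X^p)=p\,\tau(X^p\ln X)-\tau(X^p)\ln\tau(X^p)$. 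Multiplying by $\|X\|_p$ and using $\tau(X^p)=\|X\|_p^{p}$ yields the claim.

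For (ii), let $\rho(t)=\Psi_t^{\otimes n}(X)$, which satisfies $\rho'(t)=-\cK_n(\rho(t))$. I would apply the chain rule to $f(t)=\bigl(\tau(\rho(t)^{p(t)})\bigr)^{1/p(t)}$, splitting the derivative into a part holding $p$ fixed and a part holding $\rho$ fixed. Trace cyclicity gives $\frac{\partial}{\partial t}\tau(\rho^{p})=p\,\tau(\rho^{p-1}\rho')=-p\,\langle\rho^{p-1},\cK_n\rho\rangle$, and the $p$-derivative contributes exactly the entropy term as in (i). Substituting $\rho=Y^{2/p}$, so that $\rho^{p-1}=Y^{2-2/p}$ and $\tau(\rho^p)=\tau(Y^2)$, and using that $\cK_n$ is self-adjoint with respect to $\langle\cdot,\cdot\rangle$ (which follows immediately from self-adjointness of $\cL$), one may rewrite $\langle\rho^{p-1},\cK_n\rho\rangle=\langle Y^{2/p},\cK_n Y^{2-2/p}\rangle$. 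Combining, and noting that the prefactor $\tau(\rho^p)^{1/p-1}$ simplifies to $1/\gamma$ with $\gamma=\|\rho\|_{p}^{p-1}$, gives the stated formula.

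For (iii), set $G(s)=\ln\tau(X^s)$. A direct computation yields
\begin{equation*}
G''(s) \;=\; \frac{\tau(X^s(\ln X)^2)}{\tau(X^s)} \;-\; \left(\frac{\tau(X^s\ln X)}{\tau(X^s)}\right)^2\;\geq\;0,
\end{equation*}
by Cauchy--Schwarz applied to $X^{s/2}$ and $X^{s/2}\ln X$ under $\langle\cdot,\cdot\rangle$. Hence $G$ is convex on $s>0$, so for $s>1$,
\begin{equation*}
G(s)-G(1)\;\leq\;(s-1)\,G'(s).
\end{equation*}
Now $G(1)=\ln\|X\|_1$, $G(s)=s\ln\|X\|_s$, and from the identity $\Ent(X^s)=s\,\tau(X^s\ln X)-\tau(X^s)\ln\tau(X^s)$ one reads off $sG'(s)-G(s)=\Ent(X^s)/\tau(X^s)$. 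Substituting these into the convexity inequality and rearranging gives exactly $\Ent(X^s)/\tau(X^s)\geq (\ln\|X\|_s-\ln\|X\|_1)/(1-1/s)$.

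The only slightly delicate point is in (ii), where one must be careful that the non-commutativity of $\rho$ and $\ln\rho$ does not obstruct the trace-derivative identities; cyclicity of $\tau$ resolves this without complication. Beyond that, each part is a short computation building on the previous.
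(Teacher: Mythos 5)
Your proofs of (i) and (ii) are the standard computations and match what the paper implicitly invokes (it cites Lemma 12 of the CKMT reference for these). Your proof of (iii) is correct but takes a genuinely different route. The paper sets $g(x)=\ln\|X\|_{1/x}$ and establishes convexity of $g$ in $x$ via the Schatten--H\"older inequality $\|X\|_{2/(x+y)}^2\le\|X\|_{1/x}\|X\|_{1/y}$; it then applies the secant--tangent inequality to $g$ at $x=1/s$ and uses part (i) to identify $g'(1/s)$ with $-\Ent(X^s)/\tau(X^s)$. You instead work with $G(s)=\ln\tau(X^s)$, prove its convexity directly by computing $G''$ and invoking Cauchy--Schwarz on $X^{s/2}$ and $X^{s/2}\ln X$, and then combine the tangent inequality $G(s)-G(1)\le(s-1)G'(s)$ with the identity $sG'(s)-G(s)=\Ent(X^s)/\tau(X^s)$. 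Note that $g(x)=xG(1/x)$, so the two are related but not the same convexity statement. Your route is more elementary and self-contained (essentially the log-sum-exp convexity of $s\mapsto\ln\tau(X^s)$), whereas the paper's route leans on a matrix H\"older inequality already available in its toolbox and is somewhat slicker to state; both yield exactly the inequality in (iii).

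One small point worth flagging in your write-up of (ii): the identity $\frac{\partial}{\partial t}\tau(\rho^p)=p\,\tau(\rho^{p-1}\rho')$ for non-integer $p$ is a genuine but standard fact about derivatives of trace functionals of Hermitian matrix paths (it does not follow from cyclicity alone for non-integer powers; one needs, e.g., the integral/spectral representation of $\rho^p$). You are right that it holds and that no non-commutativity obstruction arises, but ``cyclicity resolves it'' slightly understates what is being used; it is cleaner to cite the general formula $\frac{d}{dt}\tr f(\rho(t))=\tr\big(f'(\rho(t))\rho'(t)\big)$ for differentiable $f$ and Hermitian $\rho(t)$.
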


\begin{proof}
Parts (i) and (ii) follow by straightforward computations and are standard (see, e.g., Lemma 12 of~\cite{CKMT15}). For the third part we use H\"older's inequality for Schatten $p$-norm~\cite{MBhatia} saying that 
\begin{align}\label{eq:Holder-ineq}
\|AB\|_r\leq \|A\|_p\, \|B\|_q,
\end{align}
if $p, q, r>0$ and $1/r = 1/p +1/q$. Then for any $x, y\geq 0$ we have
$$\|X\|_{2/(x+y)}^2 = \|X^2\|_{1/(x+y)}\leq \|X\|_{1/x}\cdot \|X\|_{1/y}.$$
This means that the function 
$$g(x)= \ln \|X\|_{\frac 1x},$$
is convex. Thus, for all $0<x<1$ we have
$$g'(x)\leq \frac{g(x)-g(1)}{x-1}.$$
Computing $g'(x)$ at $x=1/s$  using part (i), we obtain the desired inequality. 
\end{proof}

The binary entropy function for $s\in [0,1]$ is 
\begin{align}\label{eq:BEF}
h(s) = -s\ln s - (1-s)\ln(1-s).
\end{align}
We note that $h(s)$ is symmetric about $s=1/2$ and its restriction to $[0,1/2]$ is monotone increasing. Thus, $h^{-1}:[0,\ln 2]\rightarrow [0, 1/2]$ is well-defined.

We can now state a main result of this paper. 

\begin{theorem}\label{thm:main-LS}
For every positive semidefinite $X\in \bL(\cH^{\otimes n})$ we have
\begin{align}\label{eq:main-LSI}
\alpha(\xi)\cdot \Ent(X^2)\leq \langle X, \cK_n X\rangle, 
\end{align}
where 
$$\xi = \frac{\Ent(X^2)}{n \, \tau(X^2)},$$
and $\alpha:[0, \ln 2]\rightarrow [1/2, 1/(2\ln 2)]$ is given by
\begin{align}\label{eq:def-c}
\alpha(\xi)= \frac{1}{\xi}\Big(\frac 12 - \sqrt{h^{-1}(\ln 2-\xi) \cdot(1-h^{-1}(\ln 2-\xi))    }\Big).
\end{align}

\end{theorem}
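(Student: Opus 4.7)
The plan is to induct on $n$, with a tight base case at $n=1$ derived from the Polyanskiy--Samorodnitsky analysis of the two-point space, and an inductive step powered by the non-commutative entropic inequality of~\cite{BDR20} (itself based on King's $2\times 2$ block-matrix inequality~\cite{King03}) to compensate for the absence of ordinary subadditivity of $\Ent$ in the quantum setting.

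For the base case $n=1$, I would exploit the unitary covariance of $\cL$ together with the unitary invariance of $\tau$ and $\Ent$ to reduce to a diagonal $X=\mathrm{diag}(a,b)$ with $a,b\geq 0$. Writing $p=a^2/(a^2+b^2)$, direct computation yields $\xi = \Ent(X^2)/\tau(X^2) = \ln 2 - h(p)$ and $\langle X,\cL X\rangle/\tau(X^2) = \tfrac12 - \sqrt{p(1-p)}$. The symmetry of $h$ about $1/2$ gives $h^{-1}(\ln 2-\xi) \in \{p,1-p\}$, hence $p(1-p) = h^{-1}(\ln 2-\xi)\bigl(1-h^{-1}(\ln 2-\xi)\bigr)$, and the definition~\eqref{eq:def-c} of $\alpha$ makes~\eqref{eq:main-LSI} an \emph{equality} in this case. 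This is the essence of the Samorodnitsky bound on a binary alphabet.

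For the inductive step, assume~\eqref{eq:main-LSI} for $n-1$ sites. Using $\cK_n = \widehat\cL_1 + \sum_{i\geq 2}\widehat\cL_i$, the strategy is to split off the first qubit and apply the tight base case to its contribution, while applying the inductive hypothesis to the remaining $n-1$ qubits. The obstruction is that non-commutatively $\Ent(X^2)$ does not split subadditively into a ``site~$1$'' part and a ``sites $2$--$n$'' part. The entropic inequality of~\cite{BDR20} provides the needed King-type replacement: an upper bound on $\Ent(X^2)$ by the sum of a single-site entropy controlling the first qubit and a residual entropy over the remaining qubits. Bounding the former by the base case and the latter by induction yields an estimate of $\Ent(X^2)$ by $\langle X,\cK_n X\rangle$ with coefficients of the form $\alpha(\cdot)$ evaluated at the two local densities.

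The principal obstacle is that $\alpha$ depends non-linearly on the entropy density, so the ``site~$1$'' and ``sites $2$--$n$'' densities generally differ from the global $\xi = \Ent(X^2)/(n\tau(X^2))$, and one must verify that the weighted aggregate of the local constants still dominates $\alpha(\xi)$. I expect this to require an analytic study of the map $\xi\mapsto \xi\,\alpha(\xi) = \tfrac12 - \sqrt{h^{-1}(\ln 2-\xi)\bigl(1-h^{-1}(\ln 2-\xi)\bigr)}$ --- most likely a concavity or monotonicity property --- combined with a Jensen-type averaging over coordinates to reassemble the advertised constant. Establishing these analytic properties of $\alpha$ and orchestrating the induction so the pieces combine correctly is, I expect, where the bulk of the technical work lies.
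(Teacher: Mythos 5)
Your proposal matches the paper's proof in its essential structure: a tight base case at $n=1$ established by diagonalizing $X$, followed by induction on $n$ driven by the King-type entropic inequality of~\cite{BDR20} (Lemma~\ref{lem:main-ent}) applied to the $2\times 2$ block decomposition of $X$. Two points that you left open need to be resolved in a specific way. First, the analytic property of $\varphi(\xi)=\xi\alpha(\xi)$ you need is \emph{convexity} together with monotonicity (Lemma~\ref{lem:phi-convex}), not concavity: the paper bounds $\Ent(X^2)\leq e_1+e_2+e_3+e_4$ with $e_1=\Ent(M^2)$, $e_2=\tfrac12\Ent(A^2)$, $e_3=\tfrac12\Ent(B^2)$, $e_4=\Ent(|C|^2)$, introduces weights $\theta_1=\tau(M^2)$, $\theta_2=\tfrac{n-1}{2}\tau(A^2)$, $\theta_3=\tfrac{n-1}{2}\tau(B^2)$, $\theta_4=(n-1)\tau(|C|^2)$ that sum to $n\tau(X^2)$, and then uses monotonicity plus Jensen for the convex $\varphi$ to get $\varphi\bigl(\sum_i e_i/\sum_i\theta_i\bigr)\leq\sum_i(\theta_i/\sum_j\theta_j)\,\varphi(e_i/\theta_i)$; concavity would flip this inequality and kill the argument. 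Second, your outline omits an ingredient needed on the Dirichlet-form side: after applying the $n=1$ case to $M$ and the inductive hypothesis to $A$, $B$, $|C|$, $|C^\dagger|$, you are left comparing $\langle |C|,\cK_{n-1}|C|\rangle+\langle |C^\dagger|,\cK_{n-1}|C^\dagger|\rangle$ against the cross terms $\langle C,\cK_{n-1}C\rangle+\langle C^\dagger,\cK_{n-1}C^\dagger\rangle$ that arise in expanding $\langle X,\cK_n X\rangle$; this requires Lemma~\ref{lem:2}, a complete-positivity argument from~\cite{BDR20}. With convexity in place of concavity and Lemma~\ref{lem:2} added, your plan is precisely the paper's proof.
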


\begin{figure}
\begin{center}
\includegraphics[height=2.25in]{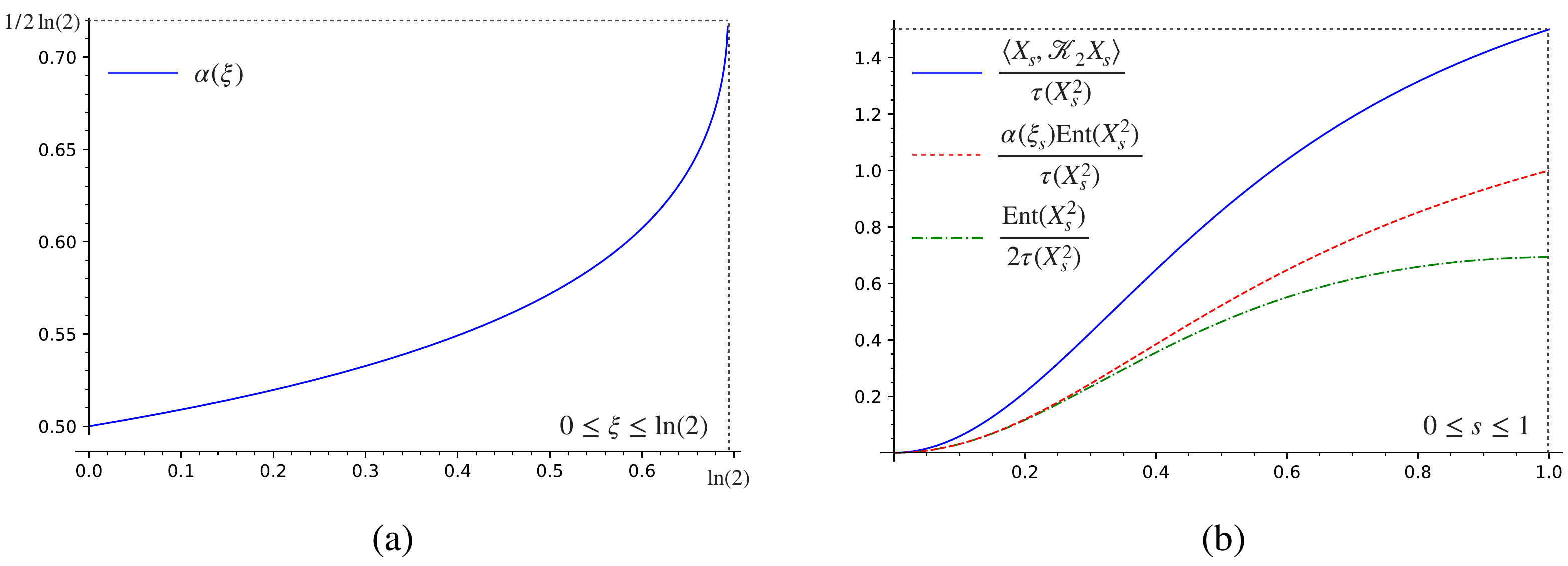}
\caption{\footnotesize (a) The plot of function $\alpha(\xi)$ defined in~\eqref{eq:def-c}. We note that as will be proven in Lemma~\ref{lem:phi-convex} this function is monotone increasing. (b) For $0\leq s\leq 1$  let $X_s= s\ket{\psi}\bra{\psi}+(1-s)\rho\otimes \rho$ where $\rho=I/2$ and $\ket\psi=\frac{1}{\sqrt 2}(\ket{00}+\ket{11})$. By Theorem~\ref{thm:main-LS},  $\alpha(\xi_s)\Ent(X_s^2)$ is a lower bound on $\langle X_s, \mathcal K_2 X_s\rangle$ that is an improvement over $\frac 12 \Ent(X_s^2)$, the standard log-Sobolev inequality. We note that both lower bounds are tight at $s=0$ where $X_s$ is a product state, while for larger values of $s$ the improvement provided by our bound becomes apparent. Here, in the plots we divide all the quantities by $\tau(X_s^2)$ for the sake of normalization. This is the same as replacing $X_s$ with $X_s/\|X_s\|_2$.
}
\label{fig:plots}
\end{center}
\end{figure}

\medskip
\begin{remark}
In the statement of Theorem~\ref{thm:main-LS} we implicitly assume that $\xi$ belongs to the domain of function $\alpha(\cdot)$ which is $[0,\ln 2]$. We will prove this fact in Appendix~\ref{app:rho}.
\end{remark}

\begin{remark}\label{rem:optimal}
We note that $\alpha(\xi)\geq 1/2$ for all $\xi\in [0, \ln 2]$ (see Figure~\ref{fig:plots}(a)). Thus,  inequality~\eqref{eq:main-LSI} is stronger than the  log-Sobolev inequality of~\cite{KT13} for the depolarizing channel. 
\end{remark}

\begin{remark}\label{rem:optimal}
It is not hard to verify that~\eqref{eq:main-LSI} is tight for any $X$ of the form $X=\rho^{\otimes n}$ where $\rho$ is an arbitrary one-qubit positive semidefinite operator.
\end{remark}

Let us examine inequality~\eqref{eq:main-LSI} for the choice of $X_s=\ket{\psi_s}\bra{\psi_s}$ where 
$$\ket{\psi_s} = \sqrt s\ket{00} + \sqrt{1-s}\ket{11}.$$
Since $X_s$ for any $0\leq s\leq 1$ is a rank-one projection, $\tau(X_s^2) =1/4$ and $\Ent(X_s^2) = \frac 12\ln (2)$ are independent of $s$, resulting in $\xi_s=\Ent(X_s^2)/2\tau(X_s^2)=\ln(2)$.  On the other hand, a simple computation verifies that 
$\langle X_s, \mathcal K_2 X_s\rangle= \frac 12 - \frac 14\big(s^2+(1-s)^2\big).$
Thus, inequality~\eqref{eq:main-LSI} gives 
$$\alpha(\xi_s)\Ent(X_s^2) = \frac 14\leq \frac 12 - \frac 14\big(s^2+(1-s)^2\big),$$
that is tight at extreme points $s\in \{0,1\}$. This inequality should be compared to the standard log-Sobolev inequality which gives the lower bound of $\frac 12\Ent(X_s^2)=\frac 14\ln(2)$ on the right hand side that is nowhere tight.  

Another interesting example is when $X$ is a mixture of a pure state and a product state. Such an example is illustrated in Figure~\ref{fig:plots}(b).

\medskip
Using the above improved log-Sobolev inequality, we can establish the following improved quantum hypercontractivity inequality.

\begin{theorem}\label{thm:main-HC}
Let $p_0>1$ and $X\in \bL(\cH^{\otimes n})$ be such that $\|X\|_{p_0}\geq e^{nr_0}\|X\|_1$ for some $ r_0\geq 0$. 
Let $u(t)$ be the solution of the differential equation
\begin{align}
&u'(t) = \alpha\big(r_0(1+ e^{-u(t)}) \big),\label{eq:diff-eq}\\
&u(0) = \ln(p_0-1),\nonumber
\end{align}
where $\alpha(\xi)$ is defined in~\eqref{eq:def-c}.
Then,  we have
\begin{align}\label{eq:main-HC}
\|\Psi_t^{\otimes n}(X)\|_{p(t)}\leq \|X\|_{p_0},
\end{align}
where $p(t) = 1+ e^{u(4t)}$. In particular, for any $t\geq 0$ we have
\begin{align}\label{eq:main-HC-weaker}
\|\Psi_t^{\otimes n}(X)\|_{1+(p_0-1)e^{4\alpha(r_0)t}}\leq \|X\|_{p_0}. 
\end{align}
\end{theorem}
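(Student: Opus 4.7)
The plan is to prove the differential inequality $\frac{d}{dt}\|\Psi_t^{\otimes n}(X)\|_{p(t)} \leq 0$ by combining Lemma~\ref{lem:entropy-derivative} with the improved log-Sobolev inequality of Theorem~\ref{thm:main-LS} and a standard $L_p$-regularity estimate for the depolarizing Dirichlet form. Set $f(t)=\|\Psi_t^{\otimes n}(X)\|_{p(t)}$ and $Y=(\Psi_t^{\otimes n}(X))^{p(t)/2}$ (so that $Y^2=(\Psi_t^{\otimes n}(X))^{p(t)}$), and note that differentiating $p(t)=1+e^{u(4t)}$ and using the ODE~\eqref{eq:diff-eq} gives the clean identity $p'(t)=4(p(t)-1)\,\alpha\!\left(r_0\,p(t)/(p(t)-1)\right)$. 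By Lemma~\ref{lem:entropy-derivative}(ii), the sign of $f'(t)$ is that of
$$
\frac{p'(t)}{p(t)^2}\Ent(Y^2)-\langle Y^{2/p(t)},\,\cK_n Y^{2-2/p(t)}\rangle.
$$

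The first ingredient I would invoke is the $L_p$-regularity of the depolarizing Dirichlet form,
$$
\langle Y^{2/p},\cK_n Y^{2-2/p}\rangle \;\geq\; \frac{4(p-1)}{p^2}\,\langle Y,\cK_n Y\rangle,
$$
which reduces the right-hand side above to the $L_2$ Dirichlet form appearing in Theorem~\ref{thm:main-LS}. The second ingredient is Theorem~\ref{thm:main-LS} applied to $Y$: it yields $\langle Y,\cK_n Y\rangle\geq \alpha(\xi_Y)\,\Ent(Y^2)$ with $\xi_Y=\Ent(Y^2)/(n\tau(Y^2))$. Together these two inequalities show that $f'(t)\leq 0$ provided $p'(t)\leq 4(p(t)-1)\,\alpha(\xi_Y)$, which, by the expression for $p'(t)$ and the monotonicity of $\alpha$ (to be established in Lemma~\ref{lem:phi-convex}), is equivalent to $\xi_Y\geq r_0\,p(t)/(p(t)-1)$.

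To establish this lower bound on $\xi_Y$, I would apply Lemma~\ref{lem:entropy-derivative}(iii) to $\Psi_t^{\otimes n}(X)$ with $s=p(t)$; since $\Psi_t$ is trace-preserving and $X$ is positive semidefinite, $\|\Psi_t^{\otimes n}(X)\|_1=\|X\|_1$, and the inequality becomes
$$
\xi_Y \;\geq\; \frac{p(t)}{n(p(t)-1)}\,\ln\!\frac{f(t)}{\|X\|_1}.
$$
Consequently the required bound $\xi_Y\geq r_0\,p(t)/(p(t)-1)$ holds exactly when $f(t)\geq e^{nr_0}\|X\|_1$, so we obtain the dichotomy: either $f(t)\geq e^{nr_0}\|X\|_1$ and then $f'(t)\leq 0$, or $f(t)<e^{nr_0}\|X\|_1\leq \|X\|_{p_0}$ and the target bound~\eqref{eq:main-HC} already holds at time $t$.

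To upgrade this pointwise dichotomy into $f(t)\leq \|X\|_{p_0}$ for all $t\geq 0$, I would run a standard continuity argument on $T=\inf\{t\geq 0:f(t)>\|X\|_{p_0}\}$: if $T$ were finite, then on a right neighborhood of $T$ we would have $f\geq \|X\|_{p_0}\geq e^{nr_0}\|X\|_1$, hence $f'\leq 0$ there by the first case of the dichotomy, contradicting the definition of $T$. Finally, the weaker inequality~\eqref{eq:main-HC-weaker} follows by noting that $1+e^{-u(t)}\geq 1$ and monotonicity of $\alpha$ give $u'(t)\geq \alpha(r_0)$, so $p(t)\geq 1+(p_0-1)e^{4\alpha(r_0)t}$; monotonicity of the Schatten norm $p\mapsto \|\cdot\|_p$ under the normalized trace then finishes. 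I expect the main obstacle to be the $L_p$-regularity inequality in the first step: once it is in hand, everything else is ODE bookkeeping built around Theorem~\ref{thm:main-LS} and Lemma~\ref{lem:entropy-derivative}.
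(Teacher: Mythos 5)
Your core argument is essentially the paper's: compute the derivative via Lemma~\ref{lem:entropy-derivative}(ii), pass from the $p$-Dirichlet form to the $L_2$ Dirichlet form via the quantum Stroock--Varopoulos inequality (your ``$L_p$-regularity'' step), apply Theorem~\ref{thm:main-LS}, and close the loop with Lemma~\ref{lem:entropy-derivative}(iii) together with the hypothesis $\|X\|_{p_0}\geq e^{nr_0}\|X\|_1$. However, your final continuity argument as written has a gap: from $T=\inf\{t:f(t)>\|X\|_{p_0}\}$ being finite you cannot conclude that $f\geq\|X\|_{p_0}$ on a right neighborhood of $T$, since the open set $\{f>\|X\|_{p_0}\}$ may accumulate at $T$ from the right while $f$ dips below in between. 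The paper's version is cleaner and avoids this: suppose $f(t_1)>\|X\|_{p_0}$, take $s_0$ to be the last time before $t_1$ with $f(s_0)=\|X\|_{p_0}$, and apply the mean value theorem on $(s_0,t_1)$ to obtain a single point $t_0$ where simultaneously $f(t_0)\geq\|X\|_{p_0}$ and $f'(t_0)>0$; this pair of facts is exactly what feeds the contradiction via your dichotomy. You should either adopt that formulation or argue that $f'\leq 0$ throughout $(s_0,t_1]$ and integrate.

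More substantively, your proof only covers positive semidefinite $X$. You define $Y=(\Psi_t^{\otimes n}(X))^{p(t)/2}$ and use $\|\Psi_t^{\otimes n}(X)\|_1=\|X\|_1$, both of which require $X\geq 0$ (as do Lemma~\ref{lem:entropy-derivative}(ii) and Theorem~\ref{thm:main-LS}), but the theorem is claimed for arbitrary $X\in\bL(\cH^{\otimes n})$. The paper fills this in with a separate reduction: set $Z=|X|$ and $Z'=|X^\dagger|$, observe that complete positivity of $\Psi_t^{\otimes n}$ preserves the block positivity
$\begin{pmatrix}\Psi_t^{\otimes n}(Z')&\Psi_t^{\otimes n}(X)\\ \Psi_t^{\otimes n}(X^\dagger)&\Psi_t^{\otimes n}(Z)\end{pmatrix}\geq 0$,
factor $\Psi_t^{\otimes n}(X)=\sqrt{\Psi_t^{\otimes n}(Z')}\,K\,\sqrt{\Psi_t^{\otimes n}(Z)}$ through a contraction $K$, and apply H\"older to get $\|\Psi_t^{\otimes n}(X)\|_q\leq\|\Psi_t^{\otimes n}(Z)\|_q^{1/2}\|\Psi_t^{\otimes n}(Z')\|_q^{1/2}$. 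Since $\|Z\|_q=\|Z'\|_q=\|X\|_q$, this reduces the general case to the positive semidefinite one. This step is not ``ODE bookkeeping'' and needs to be supplied for the proof to be complete.
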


Observe that (e.g., by H\"older's inequality) the assumption $\|X\|_{p_0}\geq e^{nr_0}\|X\|_1$ always holds for $r_0=0$. In this case, $\alpha(r_0)=1/2$, and~\eqref{eq:main-HC-weaker} is equivalent to~\eqref{eq:NC-HC}. Thus,  choosing the optimal $r_0\geq 0$, the above theorem gives improvements of~\eqref{eq:NC-HC}.

\medskip
\begin{remark}
The differential equation~\eqref{eq:diff-eq} makes sense only if $r_0(1+e^{-u(t)})$ belongs to the domain of $\alpha(\cdot)$, meaning that $r_0(1+e^{-u(t)})\leq \ln 2$. 
To prove this fact observe that by H\"older's inequality~\eqref{eq:Holder-ineq} we have
\begin{align*}
\|X\|_{p_0} &\leq \|X^{1/p_0}\|_{p_0} \cdot \|X^{1-1/p_0}\|_\infty\\
& = \|X\|_1^{1/p_0} \cdot \|X\|_\infty^{1-1/p_0}\\
&\leq \|X\|_1^{1/p_0} \cdot \big( 2^n \|X\|_1  \big)^{1-1/p_0}\\
& =  2^{n(1-1/p_0)} \|X\|_1.
\end{align*}
Then comparing to the assumption $\|X\|_{p_0}\geq e^{nr_0}\|X\|_1$, we find that $r_0\leq (1-1/p_0)\ln 2$. On the other hand, since $\alpha(\cdot)$ is positive, $u(t)$ is monotone increasing and we have $u(t)\geq u(0) = \ln(p_0-1)$. Therefore,
$$r_0(1+e^{-u(t)})\leq (1-1/p_0)\ln 2 (1+e^{-\ln(p_0-1)})= \ln 2.$$

\end{remark}

\section{Applications}\label{sec:application}

In this section, before getting to the proofs of our main results, Theorem~\ref{thm:main-LS} and Theorem~\ref{thm:main-HC}, we present some applications.

We first note that inequality~\eqref{eq:main-LSI} together with the \emph{quantum Stroock-Varopoulos inequality}~\cite{CKMT15, BDR20} gives other improved log-Sobolev inequalities. In particular, by the quantum Stroock-Varopoulos inequality we have
$$\langle X, \cK_n X\rangle\leq \frac{1}{4}\langle \ln X^2, \cK_n X^2\rangle.$$
Using this in~\eqref{eq:main-LSI} we obtain
$$4\alpha(\xi)\cdot \Ent(X^2)\leq \langle \ln X^2, \cK_n X^2\rangle.
$$ 
This inequality is an improvement over Theorem 19 of~\cite{BDR20} in the qubit case that gives $\Ent(X^2)\leq \langle \ln X^2, \cK_n X^2\rangle$. This is an improvement since $4\alpha(\xi)\geq 2$. The above inequality is important due to its applications in proving \emph{reverse hypercontractivity inequalities}\cite{BDR20} and bounding the \emph{entropy production of quantum channels}.\cite{MHFW16}

\medskip
As a second application we present an asymptotically tight quantum version of the \emph{Faber-Krahn inequality}. 

\begin{theorem}\label{thm:FK}
Let $X\in \bL(\cH^{\otimes n})$ be a positive semidefinite operator with rank $R$.  Then we have
$$\frac 12- \sqrt{h^{-1}\left(\frac{\ln R}{n}\right)\cdot \left(1-h^{-1}\left(\frac{\ln R}{n}\right)\right)}  \leq \frac{\langle X, \cK_n X\rangle}{n\,\tau(X^2)}.$$

\end{theorem}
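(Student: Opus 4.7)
The plan is to combine Theorem~\ref{thm:main-LS} with a basic rank bound on the entropy $\Ent(X^2)$. Dividing \eqref{eq:main-LSI} by $n\tau(X^2)$ and using the definition of $\alpha(\xi)$ in \eqref{eq:def-c}, the improved log-Sobolev inequality says
\[
\frac12 - \sqrt{h^{-1}(\ln 2 - \xi)\,\bigl(1-h^{-1}(\ln 2 - \xi)\bigr)} \;=\; \alpha(\xi)\,\xi \;\leq\; \frac{\langle X, \cK_n X\rangle}{n\,\tau(X^2)},
\]
with $\xi = \Ent(X^2)/(n\tau(X^2))$. So the Faber--Krahn inequality will follow if I can show $\xi \geq \ln 2 - (\ln R)/n$ and then invoke monotonicity in $\xi$.

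For the lower bound on $\xi$, let $\lambda_1,\dots,\lambda_R$ be the nonzero eigenvalues of $X^2$ and set $p_i = \lambda_i/\sum_j\lambda_j$. A direct computation from the definition of $\Ent$ yields
\[
\Ent(X^2) \;=\; \tau(X^2)\Bigl(n\ln 2 - H(p)\Bigr), \qquad H(p) = -\sum_{i=1}^R p_i\ln p_i.
\]
The maximum-entropy bound $H(p) \leq \ln R$ (since $p$ is supported on $R$ atoms) immediately gives $\xi \geq \ln 2 - (\ln R)/n$. Note that $(\ln R)/n \in [0, \ln 2]$ because $R\leq 2^n$, so the argument of $h^{-1}$ in the target inequality lies in its domain.

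Next, I need monotonicity of the map $\xi \mapsto \tfrac12 - \sqrt{h^{-1}(\ln 2 - \xi)(1-h^{-1}(\ln 2 - \xi))}$ on $[0,\ln 2]$. As $\xi$ increases, $\ln 2 - \xi$ decreases, and since $h^{-1}\colon[0,\ln 2]\to[0,1/2]$ is increasing, $h^{-1}(\ln 2-\xi)$ decreases in $[0,1/2]$; because $x\mapsto x(1-x)$ is increasing on $[0,1/2]$, the square-root term decreases, and therefore the whole expression is increasing in $\xi$. Applying this to $\xi \geq \ln 2 - (\ln R)/n$ and simplifying $\ln 2 - (\ln 2 - (\ln R)/n) = (\ln R)/n$ gives
\[
\frac12 - \sqrt{h^{-1}\!\left(\tfrac{\ln R}{n}\right)\!\left(1-h^{-1}\!\left(\tfrac{\ln R}{n}\right)\right)} \;\leq\; \alpha(\xi)\,\xi \;\leq\; \frac{\langle X, \cK_n X\rangle}{n\,\tau(X^2)},
\]
which is the claim.

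There is no real obstacle here beyond the entropy computation: the argument is essentially a direct application of Theorem~\ref{thm:main-LS} together with the observation that a rank-$R$ operator cannot have its spectral distribution too spread out, hence $\Ent(X^2)$ cannot be too small. The monotonicity fact I use is also implicit in Lemma~\ref{lem:phi-convex} (referenced in the caption of Figure~\ref{fig:plots}(a)), so I do not expect it to require extra work.
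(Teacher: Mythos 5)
Your proof is correct and reaches the same key intermediate estimate $\xi \geq \ln 2 - (\ln R)/n$ as the paper, but via a different (and arguably cleaner) route. The paper first proves $\Ent(X^2)\geq \tau(X^2)\ln\big(\tau(X^2)/\tau(X)^2\big)$ from concavity of the logarithm, and then bounds $\tau(X)^2/\tau(X^2)\leq R/2^n$ by Cauchy--Schwarz against the support projector $P$; you instead write the exact identity $\Ent(X^2)=\tau(X^2)\big(n\ln 2 - H(p)\big)$ for the spectral distribution $p$ of $X^2$, and simply invoke $H(p)\leq \ln R$. Your identity checks out (the $\ln S$ terms cancel and the $\ln 2^n$ term survives), and it makes the information-theoretic content of the bound transparent; the paper's two-step argument is a touch more indirect but equivalent. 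Your elementary monotonicity argument for $\xi\mapsto \tfrac12-\sqrt{h^{-1}(\ln 2-\xi)(1-h^{-1}(\ln 2-\xi))}$ (via $h^{-1}$ increasing into $[0,1/2]$ and $x\mapsto x(1-x)$ increasing there) is correct and suffices here, whereas the paper cites the stronger convexity-plus-monotonicity facts from Lemma~\ref{lem:phi-convex}. Everything you use (positive semidefiniteness of $X$, hence $\rank X^2 = \rank X = R$, and $R\leq 2^n$ so the argument of $h^{-1}$ stays in domain) is in order.
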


We note that this bound is already known to be asymptotically tight in the commutative case  in the limit of $n\to \infty$ when $(\ln R)/n$ is fixed.~\cite{FT05, Samor} To this end, let $X$ be a diagonal operator in the computational basis whose support consists of computational basis vectors with bounded hamming weight. See Appendix C of~\cite{FT05} for more details. 

\begin{proof}
By Theorem~\ref{thm:main-LS} we have
$$\varphi(\xi)=\xi\alpha(\xi) \leq \frac{\langle X, \cK_n X\rangle}{n\tau(X^2)},$$
where $\xi=\Ent(X^2)/n\tau(X^2)$.
We note that 
$$\Ent(X^2)\geq \tau(X^2)\ln \frac{\tau(X^2)}{\tau(X)^2}.$$
To verify this inequality, we can assume with no loss of generality that $X$ is diagonal and $\tau(X^2)=1$. Then, using the concavity of the log function we have $\tau(X^2\ln X) = -\tau(X^2\ln X^{-1})\geq -\ln \tau(X)$, that is equivalent to the above inequality. 
Next, let $P$ be the orthogonal projection on the support of $X$. Then, by the Cauchy-Schwarz inequality we have
$$\tau(X) = \langle X, P\rangle \leq \|X\|_2\cdot \|P\|_2 = \tau(X^2)^{1/2}\cdot \Big(\frac{R}{2^n}\Big)^{1/2},$$
and
$$\frac{\tau(X^2)}{\tau(X)^2}\geq \frac{2^n}{R}.$$
Putting these together we obtain
$$\xi = \frac{\Ent(X^2)}{n \tau(X^2)}\geq \frac{1}{n}\ln\frac{2^n}{R} =\ln 2- \frac{1}{n}\ln R.$$
Therefore, using the monotonicity of $\varphi(\cdot)$ proven in Lemma~\ref{lem:phi-convex} below, we have
$$\varphi\Big(\ln 2- \frac{1}{n}\ln R\Big)\leq \frac{\langle X, \cK_n X\rangle}{n\tau(X^2)},$$
that is our desired inequality.

\end{proof}

\medskip
For the next application we need to recall the Fourier expansion of qubit operators.~\cite{MO10} Denote the Pauli matrices by 
\begin{align*}
\sigma_0 =\begin{pmatrix}
1 & 0\\
0 & 1
\end{pmatrix}, \quad 
\sigma_1 =\begin{pmatrix}
0 & 1\\
1 & 0
\end{pmatrix}, \quad
\sigma_2 =\begin{pmatrix}
0 & -i\\
i & 0
\end{pmatrix}, \quad
\sigma_3 =\begin{pmatrix}
1 & 0\\
0 & -1
\end{pmatrix},
\end{align*} 
and for $\bfs\in \{0, 1, 2, 3\}^n$ let
$$\sigma_{\bfs} := \otimes_{j=1}^n \sigma_{\bfs_j}.$$
We note that $\big\{\sigma_{\bfs}\,:\,  \bfs\in \{0, 1, 2, 3\}^n\big\}$ forms an orthonormal basis for $\mathcal B(\cH^{\otimes n})$ with respect to the Hilbert-Schmidt inner product. Then, any $X\in \mathcal B(\cH^{\otimes n})$ can be written as
$$X = \sum_{\bfs}  \hat x_{\bfs} \sigma_{\bfs},$$
where $\hat x_{\bfs} = \langle \sigma_{\bfs}, X\rangle$ are called the \emph{Fourier coefficients} of $X$. 
For $\bfs \in \{0, 1, 2, 3\}^n$ we let 
$$|\bfs| = \{j\,:\,   \bfs_j\neq 0\},$$
and define the \emph{degree} of $X$ by
$$\deg(X) =\max\{|\bfs|\,:\,  \hat x_{\bfs}\neq 0 \}.$$

The significance of the Fourier expansion for us is that the superoperator $\Psi_t^{\otimes n}(\cdot)$ is diagonal in the Pauli basis. Indeed, for $1\leq k\leq 3$ we have $\Psi_t(\sigma_k)=e^{-t}\sigma_k$ and $\Psi_t(\sigma_0)=\sigma_0$. Therefore,
\begin{align}\label{eq:Psi-Pauli}
\Psi_t^{\otimes n}(\sigma_{\bfs}) = e^{-t|\bfs|}\sigma_{\bfs},
\end{align}
and
$$\Psi_t^{\otimes n}(X) = \sum_{\bfs} e^{-t|\bfs|}\hat x_\bfs \sigma_\bfs.$$

Now we can state our third application which is a \emph{quantum Schwartz-Zippel lemma}
stating that low-degree operators have high rank.

\begin{theorem}\label{thm:degree}
Let $0\neq X\in \mathcal B(\cH^{\otimes n})$ be an operator with rank $e^{nh(r_1)}\leq 2^n$ for some $0< r_1\leq 1/2$, where $h(\cdot)$ is the binary entropy function defined in~\eqref{eq:BEF}.  Then we have
\begin{align}\label{eq:degree-rank}
\frac{1}{n}\deg(X) \geq \frac 12 -\sqrt{r_1(1-r_1)}.
\end{align}
\end{theorem}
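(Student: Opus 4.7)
The plan is to combine an upper bound on $\|\Psi_t^{\otimes n}(X)\|_{p(t)}$ coming from the improved hypercontractivity of Theorem~\ref{thm:main-HC} with a lower bound on $\|\Psi_t^{\otimes n}(X)\|_2$ forced by the degree, and to drive the ratio via two rank-based H\"older comparisons. The tight factor $\tfrac{1}{2}-\sqrt{r_1(1-r_1)}$ will emerge in the limit $p_0\to 2^{-}$.

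First I would record two H\"older estimates. Let $P$ be the projection onto the range of $X$, so $\rank(P)=R=e^{nh(r_1)}$. For $p_0\in(1,2)$, writing the H\"older conjugate $p_0'=p_0/(p_0-1)$,
\[
\|X\|_1 \;=\; \tau(|X|\,P) \;\le\; \|X\|_{p_0}\,\|P\|_{p_0'} \;=\; \|X\|_{p_0}\,(R/2^n)^{(p_0-1)/p_0},
\]
which reads $\|X\|_{p_0}\ge e^{nr_0}\|X\|_1$ with $r_0=\tfrac{p_0-1}{p_0}(\ln 2-h(r_1))$. A symmetric H\"older application with exponents $2/p_0$ and $2/(2-p_0)$ yields the reverse bound $\|X\|_{p_0}\le (R/2^n)^{1/p_0-1/2}\,\|X\|_2$. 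Separately, the Pauli expansion of $X$ together with~\eqref{eq:Psi-Pauli} gives $\|\Psi_t^{\otimes n}(X)\|_2^2=\sum_\bfs e^{-2t|\bfs|}|\hat x_\bfs|^2 \ge e^{-2t\deg(X)}\|X\|_2^2$.

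Next I would invoke Theorem~\ref{thm:main-HC} with this $p_0,r_0$: $\|\Psi_t^{\otimes n}(X)\|_{p(t)}\le \|X\|_{p_0}$, where $p(t)=1+e^{u(4t)}$ and $u$ solves~\eqref{eq:diff-eq}. Since $u$ is strictly increasing and starts at $u(0)=\ln(p_0-1)<0$, there is a unique $t_*>0$ with $p(t_*)=2$, and separating variables in the ODE gives
\[
4t_* \;=\; \int_{\ln(p_0-1)}^{0} \frac{dv}{\alpha\!\bigl(r_0(1+e^{-v})\bigr)}.
\]
Chaining the three estimates at $t=t_*$,
\[
e^{-t_*\deg(X)}\,\|X\|_2 \;\le\; \|\Psi_{t_*}^{\otimes n}(X)\|_2 \;\le\; \|X\|_{p_0} \;\le\; (R/2^n)^{1/p_0-1/2}\,\|X\|_2,
\]
which rearranges to $\tfrac{1}{n}\deg(X) \ge (1/p_0-1/2)(\ln 2-h(r_1))/t_*$.

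Finally I would let $p_0\to 2^{-}$ by writing $p_0=1+e^{-\epsilon}$ with $\epsilon\to 0^+$. A Taylor expansion gives $1/p_0-1/2 \sim \epsilon/4$ and $r_0\to (\ln 2-h(r_1))/2$, so $r_0(1+e^{-v})\to \ln 2-h(r_1)$ uniformly for $v\in[-\epsilon,0]$; continuity of $\alpha$ then yields $4t_*\sim \epsilon/\alpha(\ln 2-h(r_1))$. Hence the lower bound converges to $\alpha(\ln 2-h(r_1))\,(\ln 2-h(r_1))$, which by the definition~\eqref{eq:def-c} of $\alpha$ and the identity $h^{-1}(h(r_1))=r_1$ equals exactly $\tfrac{1}{2}-\sqrt{r_1(1-r_1)}$. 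The main technical step is this asymptotic analysis of the ODE-defined $t_*$---establishing the relation $t_*\sim \epsilon/(4\alpha(\ln 2-h(r_1)))$---after which the bound on $\deg(X)/n$ passes to the limit and Theorem~\ref{thm:degree} follows.
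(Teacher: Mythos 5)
Your argument is correct and reaches the stated bound, but it is a genuinely different route from the paper's. The paper introduces the Fourier-truncation super-projector $\Pi_k$ with $\Pi_k\leq e^{tk}\Psi_t^{\otimes n}$, factors $\|\Pi_k(X)\|_2^2\leq e^{tk}\|X\|_{q(t)}\|\Psi_t^{\otimes n}(X)\|_{p(t)}$ via H\"older duality on $\langle X,\Psi_t^{\otimes n}(X)\rangle$, applies Theorem~\ref{thm:main-HC} directly at $p_0=2$ (with $r_0=\tfrac12(\ln 2-h(r_1))$), and then sends $t\to 0^+$, extracting the tight factor from a first-order Taylor expansion $1/p(t)=\tfrac12-\alpha(\ln 2-h(r_1))t+O(t^2)$ and the sign condition $\|X\|_2^2\leq e^{t\deg X-n(\ln 2-h(r_1))(\alpha(\cdot)t+O(t^2))}\|X\|_2^2$. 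You instead avoid $\Pi_k$ and H\"older duality entirely, using the cleaner spectral lower bound $\|\Psi_t^{\otimes n}(X)\|_2\geq e^{-t\deg X}\|X\|_2$ from the Pauli expansion; you then apply Theorem~\ref{thm:main-HC} at $p_0<2$, stop at the finite time $t_*$ where $p(t_*)=2$, chain three inequalities, and take $p_0\to 2^-$. Your route is arguably more transparent at the level of the chain of norm inequalities, but it shifts the analytic work into the asymptotics of the ODE-defined $t_*$; the paper's route keeps $p_0$ fixed at $2$, so only a one-variable small-$t$ expansion is needed, at the cost of introducing $\Pi_k$ and the bilinear H\"older step. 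Both approaches hinge on the same underlying limit and produce $(\ln 2-h(r_1))\,\alpha(\ln 2-h(r_1))=\tfrac12-\sqrt{r_1(1-r_1)}$; one minor point worth noting in your version is that your intermediate $r_0=\tfrac{p_0-1}{p_0}(\ln 2-h(r_1))$ depends on $p_0$ and only converges to the paper's $r_0$ in the limit, which is fine but means the bound obtained for fixed $p_0<2$ is strictly weaker than the limiting statement, so the passage $p_0\to 2^-$ is genuinely needed rather than a cosmetic simplification.
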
 

A quantum Schwartz-Zippel lemma was first proven in Corollary 52 of~\cite{MO10} where it was conjectured that it can be improved. Here, we show that our result is indeed an improvement. Recall that by Corollary 52 of~\cite{MO10} we have
$$\frac{1}{n}\deg(X)\geq \frac 12(\ln 2 - \frac{1}{n}\ln(\rank X))=\frac 12(\ln 2 - h(r_1)).$$
Theorem~\ref{thm:degree} is an improvement over this bound since we know that $\alpha(\ln 2-h(r_1))\geq 1/2$ and then 
$$ \frac 12 -\sqrt{r_1(1-r_1)}\geq \frac 12(\ln 2 - h(r_1)).$$

\begin{proof}
For $k\leq n$ define the superoperator $\Pi_k:\mathcal B(\cH^{\otimes n})\to \mathcal B(\cH^{\otimes n})$ by
\begin{align*}
\Pi_k (\sigma_{\bfs}) =
\begin{cases}
\sigma_{\bfs}, \quad |\bfs|\leq k\\
0, \quad \text{otherwise.}
\end{cases}
\end{align*}
Then using~\eqref{eq:Psi-Pauli} we have
$$\Pi_k\leq e^{tk}\Psi_t^{\otimes n},$$
meaning that $e^{tk}\Psi_t^{\otimes n}-\Pi_k$ is a positive semidefinite superoperator. Thus using the fact that $\Pi_k$ is a super-projector we have
\begin{align*}
\|\Pi_k (X)\|_2^2 & = \langle X, \Pi_k (X)\rangle\\
& \leq e^{tk} \langle X, \Psi_t^{\otimes n}(X)\rangle\\
&\leq e^{tk} \|X\|_q\cdot \|\Psi_t^{\otimes n}(X)\|_p,
\end{align*}
where in the last line we apply H\"older's inequality for $1\leq q\leq p$ with $1/p+1/q=1$.

For $1\leq q\leq 2$ define $\hat q$ by $\frac{1}{\hat q} =\frac 1q-\frac 12 $. Also, let $P$ be the projection on the support of $X$. 
Then, by H\"older's inequality  we have
\begin{align}\label{eq:rank-norm-arg}
\|X\|_q = \|PX\|_q \leq \|P\|_{\hat q}\cdot\|X\|_2 = \Big(\frac{e^{nh(r_1)}}{2^n}\Big)^{\frac{1}{\hat q}}\|X\|_2 =  e^{-n\big(\ln 2-h(r_1)\big)(\frac 1q-\frac 12)}\|X\|_2,
\end{align}
where in the second equality we use the fact that $P$ is a projection with  rank $e^{nh(r_1)}$.
Using this for $q=1$, we find that $\|X\|_2\geq e^{nr_0}\|X\|_1$ for 
$$r_0=\frac{1}{2}(\ln 2 - h(r_1)).$$
Thus, we may apply Theorem~\ref{thm:main-HC} with $p_0=2$. That is, 
$$\|\Psi_t^{\otimes n}(X)\|_{p(t)}\leq \|X\|_2, \qquad \quad \forall t\geq 0$$
where $p(t)=1+e^{u(4t)}$ with $u(0)=0$ and $u'(t)=\alpha\big(r_0(1+e^{-u(t)})\big)$.  Then, letting $q(t)=p(t)/\big(p(t)-1\big)\leq 2$ be the H\"older conjugate of $p(t)$, we have
\begin{align*}
\|\Pi_k (X)\|_2^2 \leq e^{tk} \|X\|_{q(t)}\cdot \|\Psi_t^{\otimes n}(X)\|_{p(t)}\leq e^{tk} \|X\|_{q(t)}\cdot \|X\|_{2}.
\end{align*}
Next, once again using~\eqref{eq:rank-norm-arg} for $q=q(t)\leq 2$ we obtain
\begin{align*}
\|\Pi_k (X)\|_2^2 \leq e^{tk} e^{-n\big(\ln 2-h(r_1)\big)(\frac{1}{q(t)}-\frac 12)}\|X\|_2^2 = e^{tk} e^{-n\big(\ln 2-h(r_1)\big)(\frac 12-\frac{1}{p(t)})}\|X\|_2^2.
\end{align*}
We note that this inequality holds for any $t\geq 0$. On the other hand, for small values of $t$, using $p'(0) = 4u'(0)e^{u(0)} = 4\alpha(2r_0)=4\alpha(\ln 2-h(r_1))$ we have
$$\frac{1}{p(t)} = \frac{1}{p(0)} -\frac{p'(0)}{p(0)^2} t + O(t^2)= \frac{1}{2} -\alpha(\ln 2-h(r_1)) t + O(t^2),$$
and then
\begin{align*}
\|\Pi_k (X)\|_2^2 \leq e^{tk} e^{-n\big(\ln 2-h(r_1)\big)(\alpha(\ln 2 -h(r_1))t+O(t^2))}\|X\|_2^2.
\end{align*}

Let $k=\deg(X)$ in which case $\Pi_k (X)=X$.
Then, since $X\neq 0$, the above inequality says that for all sufficiently small $t>0$ we have
$$t\deg(X) -n\big(\ln 2-h(r_1)\big)(\alpha(\ln 2 -h(r_1))t+O(t^2))\geq 0.$$
Equivalently, this means that
$$\frac 1n\deg(X) \geq \big(\ln 2-h(r_1)\big)\alpha(\ln 2 -h(r_1)= \frac 12-\sqrt{r_1(1-r_1)}.$$

\end{proof}

\section{Auxiliary lemmas}\label{sec:aux}

As mentioned above, Theorem~\ref{thm:main-LS} and Theorem~\ref{thm:main-HC} for $n=1$ are essentially proven in~\cite{Samor, PSamor}. Here, to take the induction step and prove these results for arbitrary $n$, we need to bound the entropy of an operator in $\mathcal B(\cH^{\otimes n})$ in terms of the entropy of certain operators in $\mathcal B(\cH^{\otimes (n-1)})$.

\begin{lemma}\label{lem:main-ent} \emph{(\cite{BDR20})}
Suppose that $X$ is a positive semidefinite $2N\times 2N$ matrix of the block form
\begin{align}\label{eq:X}
X= \begin{pmatrix}
A & C\\
C^\dagger & B
\end{pmatrix},
\end{align} 
where $A, B, C$ are $N\times N$ matrices. 
Define 
\begin{align}\label{eq:M}
M=\begin{pmatrix}
\|A\|_2 & \|C\|_2\\
\|C^\dagger\|_2 & \|B\|_2
\end{pmatrix}.
\end{align}
Then we have
\begin{align}\label{eq:X^2-M^2-0}
\Ent(X^2)\leq \Ent(M^2) + \frac{1}{2}\Ent(A^2)+\frac{1}{2}\Ent(B^2) +\Ent(|C|^2).
\end{align}
\end{lemma}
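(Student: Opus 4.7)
My plan is to obtain \eqref{eq:X^2-M^2-0} as the infinitesimal form at $p=2$ of a King-type norm inequality for positive semidefinite block matrices, namely
$$
\tau_{2N}(X^p) \;\leq\; \tau_2(M_p^p), \qquad p \geq 2,
$$
where $M_p=\begin{pmatrix}\|A\|_p & \|C\|_p \\ \|C^\dagger\|_p & \|B\|_p\end{pmatrix}$. This is King's block-matrix inequality~\cite{King03} recast in normalized traces (the form used in~\cite{BDR20}). An immediate expansion shows both sides equal $\tfrac12(\|A\|_2^2+\|B\|_2^2+2\|C\|_2^2)$ at $p=2$, so the bound is saturated there; in particular the right derivative at $p=2$ of the left-hand side is at most that of the right-hand side.

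The derivative of the left side at $p=2$ is, by positivity of $X$, equal to $\tfrac12\tau_{2N}(X^2\ln X^2)=\tfrac12\bigl[\Ent(X^2)+\tau(X^2)\ln\tau(X^2)\bigr]$. For the right side I would split the $p$-dependence via $G(q,p):=\tau_2(M_q^p)$, so that $\tau_2(M_p^p)=G(p,p)$ and the chain rule produces two contributions at $(2,2)$. The ``exponent'' contribution $\partial_p G|_{(2,2)}$ equals $\tfrac12\bigl[\Ent(M^2)+\tau(M^2)\ln\tau(M^2)\bigr]$ by the same positive-matrix calculation as on the left. The ``entries'' contribution $\partial_q G|_{(2,2)}$ differentiates only the three scalar Schatten norms $\|A\|_q,\|B\|_q,\|C\|_q$ inside $M_q$; by Lemma~\ref{lem:entropy-derivative}(i), $\tfrac{d}{dq}\|Y\|_q\bigr|_{q=2}=\Ent(Y^2)/(4\|Y\|_2)$, and pairing each norm's derivative with the matching coefficient in $\tau_2(M_q^2)=\tfrac12(\|A\|_q^2+\|B\|_q^2+2\|C\|_q^2)$ accumulates $\tfrac14\bigl[\Ent(A^2)+\Ent(B^2)+2\Ent(|C|^2)\bigr]$.

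Since $\tau(X^2)=\tau(M^2)$, the $\tau\ln\tau$ contributions cancel when comparing the two derivatives, and multiplying by $2$ yields \eqref{eq:X^2-M^2-0} exactly. The substantive obstacle is the underlying $p$-norm inequality itself: it is a genuinely non-commutative statement resting on King's block-matrix analysis~\cite{King03}, and adapting it cleanly to normalized traces (so that the $p=2$ saturation aligns with $\tau(X^2)=\tau(M^2)$) is what makes the differentiation step usable. By contrast, the derivative computation is routine, justified by smoothness of $p\mapsto\tau(X^p)$ and $p\mapsto\tau_2(M_p^p)$ for $p>1$ with $A,B,C$ held fixed.
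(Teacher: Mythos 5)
Your proposal is correct and follows the same route as the paper's proof sketch: both start from King's block-matrix inequality $\|X\|_p\leq\|M_p\|_p$ for $p\geq 2$ (in normalized-trace form), observe equality at $p=2$ since $\tau(X^2)=\tau(M^2)$, and differentiate at $p=2$ via Lemma~\ref{lem:entropy-derivative}(i) to extract the entropy inequality. The only difference is that you spell out the chain rule through the auxiliary function $G(q,p)=\tau_2(M_q^p)$, which the paper leaves implicit when it says ``computing the derivative of $g(p)$''.
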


This lemma in a more general form is proven in~\cite{BDR20}. Here, we briefly explain the proof idea and refer to~\cite{BDR20} for a detailed proof.

\begin{proof}[Proof sketch.]

For every $p\geq 1$ define 
$$M_p=\begin{pmatrix}
\|A\|_p & \|C\|_p\\
\|C^\dagger\|_p & \|B\|_p
\end{pmatrix}.$$
Thus $M=M_2$. It is shown in~\cite{King03} that for every $p\geq 2$ we have
$$\|X\|_p\leq \|M_p\|_p.$$
That is, letting $g(p):= \|M_p\|_p^p - \|X\|_p^p$, we have $g(p)\geq 0$ for all $p\geq 2$. On the other hand, we note that $g(2)=0$. Thus, we must have $g'(2)\geq 0$. Computing the derivative of $g(p)$ using Lemma~\ref{lem:entropy-derivative}, we find that $g'(2)\geq 0$ is equivalent to~\eqref{eq:X^2-M^2-0}. 

\end{proof}

We need yet another lemma in the proof of Theorem~\ref{thm:main-LS}.

\begin{lemma}\label{lem:2} \emph{(\cite{BDR20})}
For every $C\in \bL(\cH^{\otimes n})$ we have
$$\langle C, \cK_n C \rangle +\langle C^\dagger, \cK_n C^\dagger \rangle \geq \langle |C|, \cK_n |C| \rangle +\langle |C^\dagger|, \cK_n |C^\dagger| \rangle.$$
\end{lemma}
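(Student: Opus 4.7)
The plan is to reduce the inequality to a statement about a single Hermitian matrix via the standard $2\times 2$ block dilation of $C$, and then conclude by Jordan decomposition together with positivity of the conditional expectation. Concretely, I first rewrite $\cK_n = \sum_{i=1}^n \widehat{\cL}_i$ with $\widehat{\cL}_i = \cI - E_i$, where $E_i$ is the conditional expectation onto operators acting trivially on the $i$-th site, and use that $E_i$ is an orthogonal projection with respect to $\langle\cdot,\cdot\rangle$ to obtain $\langle X,\widehat{\cL}_i X\rangle = \|X\|_2^2 - \|E_i X\|_2^2$. Since $C,C^\dagger,|C|,|C^\dagger|$ all share the same singular values, the $\|X\|_2^2$ contributions match on both sides of the lemma and cancel, leaving the equivalent statement
\begin{equation*}
\sum_{i=1}^n \big(\|E_i C\|_2^2 + \|E_i C^\dagger\|_2^2\big) \;\le\; \sum_{i=1}^n \big(\|E_i |C|\|_2^2 + \|E_i |C^\dagger|\|_2^2\big),
\end{equation*}
which I plan to establish for each $i$ individually.

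The key step is the Hermitian dilation
\begin{equation*}
H \;=\; \begin{pmatrix} 0 & C \\ C^\dagger & 0 \end{pmatrix}, \qquad |H| \;=\; \begin{pmatrix} |C^\dagger| & 0 \\ 0 & |C| \end{pmatrix},
\end{equation*}
where the formula for $|H|$ is immediate from $H^2 = \operatorname{diag}(CC^\dagger, C^\dagger C)$. Let $F_i$ denote the extension of $E_i$ to $2\times 2$ block matrices (i.e., apply $E_i$ to each block separately). Because conditional expectations are completely positive, $F_i$ is positive. A direct reading of the blocks yields
\begin{equation*}
2\|F_i H\|_2^2 = \|E_i C\|_2^2 + \|E_i C^\dagger\|_2^2, \qquad 2\|F_i |H|\|_2^2 = \|E_i |C|\|_2^2 + \|E_i |C^\dagger|\|_2^2,
\end{equation*}
so the per-$i$ reduction becomes the purely Hermitian statement $\|F_i H\|_2^2 \le \|F_i |H|\|_2^2$.

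To conclude, I would use the Jordan decomposition $H = H_+ - H_-$, $|H| = H_+ + H_-$ with $H_\pm \ge 0$. Expanding,
\begin{equation*}
\|F_i |H|\|_2^2 - \|F_i H\|_2^2 \;=\; 4\,\langle F_i H_+,\, F_i H_-\rangle \;=\; 4\,\tau\!\big(F_i(H_+)\,F_i(H_-)\big),
\end{equation*}
and positivity of $F_i$ makes both $F_i(H_\pm)$ positive semidefinite, whence the trace of their product is non-negative. The only step that requires any genuine insight is spotting the Hermitian dilation — once one realizes that $(C,C^\dagger)$ on the left and $(|C|,|C^\dagger|)$ on the right are exactly the off-diagonal and absolute-value parts of $H$, everything else is essentially forced; I do not anticipate any other obstacle.
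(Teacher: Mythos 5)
Your proof is correct, and it takes a genuinely different route than the paper's. The paper's proof keeps the full generator $\cK_n$, observes that
$Z_\pm = \begin{pmatrix} |C^\dagger| & \pm C \\ \pm C^\dagger & |C| \end{pmatrix} \ge 0$,
applies the completely positive map $\Psi_t^{\otimes n}$ blockwise to $Z_+$, pairs the result against $Z_-$ to obtain a nonnegative function $g(t)$ of $t\geq 0$ with $g(0)=0$, and then differentiates at $t=0$ to recover $\cK_n$. You instead avoid the semigroup entirely: you decompose $\cK_n=\sum_i(\cI-E_i)$ into conditional expectations, observe that the $\|{\cdot}\|_2^2$ contributions cancel because $C, C^\dagger, |C|, |C^\dagger|$ share singular values, reduce to a single-site statement $\|E_iC\|_2^2+\|E_iC^\dagger\|_2^2\le\|E_i|C|\|_2^2+\|E_i|C^\dagger|\|_2^2$, and prove it via the Hermitian dilation $H$ and its Jordan parts $H_\pm$. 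Note that your $H_\pm$ are precisely $\tfrac12 Z_\pm$, so the two arguments rely on the same positive block matrices; the difference is that you work directly with the generator and orthogonal projections, which removes the need for the differentiation step and makes the appearance of $\tr(PQ)\ge 0$ for $P,Q\ge 0$ explicit. Your route is arguably cleaner for the depolarizing generator specifically, since it exploits $\widehat\cL_i=\cI-E_i$ being a rank-one-modified projection; the paper's route is more readily adaptable to other Markov semigroups where the Lindblad generator is not of this idempotent form.

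Two small points worth tightening in a write-up: (i) you correctly need complete positivity (or at least $2$-positivity) of $E_i$, not mere positivity, to get positivity of $F_i=E_i\otimes\cI_2$; you did invoke CP of conditional expectations, but make sure the logic is stated for $F_i$ and not $E_i$. (ii) The normalization $2\|F_iH\|_2^2=\|E_iC\|_2^2+\|E_iC^\dagger\|_2^2$ relies on the normalized trace $\tau$ on block matrices carrying an extra factor of $\tfrac12$ relative to $\tau$ on the blocks; this works out, but it is the kind of detail a careful reader will want spelled out.
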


\begin{proof}
Observe that 
$$Z_{\pm} = \begin{pmatrix}
|C^\dagger| & \pm C\\
\pm C^\dagger & |C|
\end{pmatrix}\geq 0,$$
is positive semidefinite.~\cite{PBhatia} On the other hand, $\Psi_t^{\otimes n}$ is completely positive. Therefore, 
$$\begin{pmatrix}
\Psi_t^{\otimes n}(|C^\dagger|) & \pm \Psi_t^{\otimes n}(C)\\
\pm \Psi_t^{\otimes n}(C^\dagger) & \Psi_t^{\otimes n}(|C|)
\end{pmatrix}\geq 0.$$
As a result, 
$$\Bigg\langle \begin{pmatrix}
|C^\dagger| & - C\\
- C^\dagger & |C|
\end{pmatrix},  \begin{pmatrix}
\Psi_t^{\otimes n}(|C^\dagger|) &  \Psi_t^{\otimes n}(C)\\
 \Psi_t^{\otimes n}(C^\dagger) & \Psi_t^{\otimes n}(|C|)
\end{pmatrix}\Bigg\rangle\geq 0.$$
This means that
$$\langle |C|, \Psi_t^{\otimes n}(|C|)\rangle +\langle |C^{\dagger}|, \Psi_t^{\otimes n}(|C^\dagger|)\rangle \geq \langle C, \Psi_t^{\otimes n}(C)\rangle+\langle C^\dagger, \Psi_t^{\otimes n}(C^\dagger)\rangle.$$
This inequality holds for all $t\geq 0$, and turns into an equality for $t=0$. Then, the desired inequality follows once we note that 
$$\frac{\dd}{\dd t}\Psi_t^{\otimes n}\Big|_{t=0} = -\cK_n.$$ 

\end{proof}

Our final lemma presents some properties of the function $\alpha(\xi)$ defined in the statement of Theorem~\ref{thm:main-LS}.

\begin{lemma}\label{lem:phi-convex} \emph{(Theorem 6 of~\cite{PSamor} and Lemma 2.1 of~\cite{Samor})}
The function $\varphi(\xi)$ defined on the interval $[0, \ln 2]$ by
$$\varphi(\xi)=\frac 12 - \sqrt{h^{-1}(\ln 2-\xi) \cdot(1-h^{-1}(\ln 2-\xi))    },$$
is convex and monotone increasing. Moreover, $\alpha(\xi) = \varphi(\xi)/\xi$ is monotone increasing. 
\end{lemma}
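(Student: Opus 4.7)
My plan is to reduce the whole lemma to the well-known monotonicity of $\sinh(v)/v$ via a hyperbolic change of variables. First, I would reparametrize the domain by $u \in [0,1]$ via the substitution $h^{-1}(\ln 2 - \xi) = (1-u)/2$, so that $\sqrt{\eta(1-\eta)} = \sqrt{1-u^2}/2$ where $\eta = h^{-1}(\ln 2-\xi)$. Expanding $h$ gives the closed forms
\[
\xi = D(u) := \tfrac{1}{2}\bigl[(1+u)\ln(1+u) + (1-u)\ln(1-u)\bigr], \qquad \varphi = \tfrac{1}{2}\bigl(1-\sqrt{1-u^2}\bigr).
\]
One checks that as $u$ runs over $[0,1]$, $\xi$ runs bijectively over $[0,\ln 2]$. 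Both $D(u)$ and the expression for $\varphi$ are strictly increasing on $(0,1)$ (note $D'(u) = \tfrac{1}{2}\ln\frac{1+u}{1-u} > 0$ and $1-\sqrt{1-u^2}$ is manifestly increasing), so the monotonicity of $\varphi(\xi)$ is immediate.

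For convexity, I would show that $d\varphi/d\xi$ is non-decreasing in $u$, which suffices because $u$ is increasing in $\xi$. A direct calculation gives
\[
\frac{d\varphi}{d\xi} = \frac{d\varphi/du}{dD/du} = \frac{u}{\sqrt{1-u^2}\,\ln\frac{1+u}{1-u}},
\]
which is opaque as it stands. The key move is the substitution $u = \tanh v$ with $v \geq 0$: then $\sqrt{1-u^2} = 1/\cosh v$ and $\ln\frac{1+u}{1-u} = 2v$, and the ratio collapses to $\sinh(v)/(2v)$. Since $\sinh(v)/v = \sum_{k \geq 0} v^{2k}/(2k+1)!$ has non-negative coefficients, it is non-decreasing on $[0,\infty)$, and because $v = \tanh^{-1}(u)$ is increasing in $u$, $d\varphi/d\xi$ is non-decreasing in $\xi$. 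Convexity of $\varphi$ follows.

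The monotonicity of $\alpha(\xi) = \varphi(\xi)/\xi$ is then an immediate corollary: one checks from the parametrization that $\varphi(0) = 0$, and for any convex function vanishing at the origin the chord slopes $\varphi(\xi)/\xi = \bigl(\varphi(\xi)-\varphi(0)\bigr)/(\xi-0)$ are non-decreasing by the standard three-slope characterization of convexity. The only real obstacle is identifying the right change of variables: once the hyperbolic substitution replaces the unwieldy ratio above with $\sinh(v)/(2v)$, there is nothing delicate left to verify. Without this substitution one would be forced into a tangled second-derivative calculation involving $h^{-1}$, a logarithm, and a square root, which I would expect to be considerably less transparent.
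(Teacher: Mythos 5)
Your proof is correct and is essentially the same as the paper's: both parametrize by the argument of $h^{-1}$ (your $(1-u)/2$ is the paper's $x$) and then reduce the convexity claim to a positivity statement about $\sinh$. Your hyperbolic substitution $u=\tanh v$ exposes $d\varphi/d\xi=\sinh(v)/(2v)$ directly and disposes of it via its Taylor series, whereas the paper arrives at the identical inequality $sg'(s)\ge g(s)$ for $g(s)=e^s-e^{-s}=2\sinh s$ via the intermediate variable $y=(1-x)/x=e^{2s}$; the monotonicity-of-$\alpha$ step is identical in both.
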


\begin{proof}
Let $x\in [0,1/2]$ be such that $\xi=\ln 2- h(x)$. Then, we have
$\varphi(\xi) = \frac{1}{2} - \sqrt{x(1-x)}$. Taking the derivative of both sides with respect to $x$ we obtain
$$\ln\big(\frac{1-x}{x}\big) \varphi'(\xi) = \frac{1-2x}{2\sqrt{x(1-x)}}.$$
Thus, $\varphi'(\xi)\geq 0$ and $\varphi$ is monotone increasing. 

Let $y=(1-x)/x \geq 1$. Then, the above equation can be rewritten as 
$$\varphi'(\xi) = \frac{1}{2\ln y} f(y),$$
where $f(y) = \sqrt y - 1/\sqrt y$. We note that, $\frac{\dd}{\dd x} y = -1/x^2$. Therefore, taking the derivative of both sides of the above equation with respect to $x$ we obtain
$$-\ln(y)\varphi''(\xi) = -\frac{1}{2x^2} \frac{\dd}{\dd y}\Big(\frac{1}{\ln y}f(y)\Big).$$
Hence, to show that $\varphi(\xi)$ is convex, it suffices to show that 
$$\frac{\dd}{\dd y}\Big(\frac{1}{\ln y}f(y)\Big)\geq 0,$$
or equivalently
$$y\ln(y) f'(y)\geq f(y).$$
Define $g(s) = f(e^{2s}) = e^s- e^{-s}$ for $s\geq 0$. Then, using $g'(s)=2e^{2s}f'(e^{2s})$  the above inequality is equivalent to 
$$s g'(s)\geq g(s).$$
This inequality holds since $g(s) =  e^s- e^{-s}$ is a monotone increasing convex function for $s\geq 0$  and $g(0)=0$. 

To verify that $\alpha(\xi)$ is monotone increasing, note that $\alpha'(\xi) = (\xi \varphi'(\xi) - \varphi(\xi))/\xi^2$ that is non-negative since $\varphi(\xi)$ is convex, monotone increasing and $\varphi(0)=0$.

\end{proof}

\section{Proof of Theorem~\ref{thm:main-LS}}\label{sec:proof-LS}
As argued in the proof of Lemma~\ref{lem:rho} in Appendix~\ref{app:rho}, for $n=1$ we may assume that $X^2$ is diagonal with eigenvalues $1\pm r$ for some $r\in [0,1]$. Then $\tau(X^2)=1$ and $\xi=\Ent(X^2)= \ln 2-h((1-r)/2)$. Therefore, 
$$\alpha(\xi) = \frac{1}{\xi}\Big(\frac{1}{2} - \frac 12 \sqrt{1-r^2}\Big),$$
and 
$$\alpha(\xi)\cdot \Ent(X^2) = \frac{1}{2} \big(1-\sqrt{1-r^2}\big).$$
On the other hand, by a simple computation we have
$$\langle X, \cK_1 X\rangle = \tau(X^2) - \tau(X)^2 = \frac{1}{2} \big(1-\sqrt{1-r^2}\big).$$
Therefore, the desired inequality for $n=1$ holds as an equality.

Now suppose that $n>1$ and that $X\in \mathcal B(\cH^{\otimes n})$ has the block form~\eqref{eq:X}. Let 
$$\varphi(\xi)= \xi \alpha (\xi)=\frac 12 - \sqrt{h^{-1}(\ln 2-\xi) \cdot(1-h^{-1}(\ln 2-\xi))    }.$$
We need to show that 
$$n\tau(X^2)\, \varphi\Big(\frac{\Ent(X^2)}{ n\tau(X^2)}\Big)\leq \langle X, \cK_n X\rangle.$$
By Lemma~\ref{lem:main-ent} we have $\Ent(X^2)\leq e_1+e_2+e_3+e_4$ where
$$e_1= \Ent(M^2),\quad e_2=\frac 1 2 \Ent(A^2),\quad e_3= \frac 1 2 \Ent(B^2),\quad e_4 =\Ent(|C|^2).$$
On the other hand, a simple computation shows that
\begin{align*}
\tau(X^2) = \tau(M^2) = \frac 12\tau(A^2) + \frac 12\tau(B^2) + \tau(|C|^2),
\end{align*}
and 
\begin{align*}
n\tau(X^2) = \tau(M^2) + \frac{n-1}{2}\tau(A^2) + \frac{n-1}{2}\tau(B^2) + (n-1)\tau(|C|^2).
\end{align*}
Therefore, we have
$$\frac{\Ent(X^2)}{n \tau(X^2)} \leq \frac{e_1+e_2+e_3+e_4}{\theta_1+\theta_2+\theta_3+\theta_4},$$ 
where
\begin{align*}
\theta_1=\tau(M^2),\quad \theta_2= \frac{n-1}{2}\tau(A^2),\quad \theta_3= \frac{n-1}{2}\tau(B^2),\quad \theta_4= (n-1) \tau(|C|^2).
\end{align*}
Using the monotonicity and convexity of $\varphi(\cdot)$ proven in Lemma~\ref{lem:phi-convex} we have
\begin{align*}
\varphi\Big(\frac{\Ent(X^2)}{ n \tau(X^2)}\Big) &\leq \varphi\Big(\frac{e_1+e_2+e_3+e_4}{\theta_1+\theta_2+\theta_3+\theta_4}\Big)\\
&\leq \sum_{i=1}^4 \frac{\theta_i}{\theta_1+\theta_2+\theta_3+\theta_4} \,\varphi\big(\frac{e_i}{\theta_i}\big).
\end{align*}
This means that
\begin{align*}
n\tau(X^2) \varphi\Big(\frac{\Ent(X^2)}{ n \tau(X^2)}\Big)& \leq \tau(M^2)  \varphi\Big(\frac{\Ent(M^2)}{\tau(X^2)}\Big) + \frac{n-1}{2}\tau(A^2)\,\varphi\Big(\frac{\Ent(A^2)} {(n-1)\tau(A^2)}\Big)\\
& \qquad +\frac{n-1}{2}\tau(B^2)\,\varphi\Big(\frac{\Ent(B^2)}{(n-1)\tau(B^2)}\Big)\\
&\qquad + (n-1)\tau(|C|^2)\,\varphi\Big(\frac{\Ent(|C|^2)}{(n-1)\tau(|C|^2)}\Big).
\end{align*}
Next, by the induction hypothesis 
and the fact that $\Ent(|C|^2) = \Ent(|C^\dagger|^2)$ and $\tau(|C|^2) = \tau(|C^\dagger|^2)$, we obtain
\begin{align}\label{eq:ptt}
n\tau(X^2) \varphi\Big(\frac{\Ent(X^2)}{ n \tau(X^2)}\Big) & \leq  \langle M, \cL M\rangle + \frac 12 \langle A, \cK_{n-1} A\rangle +\frac 12 \langle B, \cK_{n-1} B\rangle \nonumber\\
&\qquad + \frac 12 \langle |C|, \cK_{n-1} |C|\rangle +\frac 12\langle |C^\dagger|, \cK_{n-1} |C^\dagger|\rangle.
\end{align}
We now note that
\begin{align*}
\langle X, \cK_n X\rangle = \langle X, \cL\otimes \cI^{\otimes (n-1)}(X)\rangle + \Bigg\langle \begin{pmatrix}
A &  C\\
 C^\dagger & B
\end{pmatrix},  \begin{pmatrix}
\cK_{n-1}(A) &  \cK_{n-1}(C)\\
 \cK_{n-1}(C^\dagger) & \cK_{n-1}(B)
\end{pmatrix}\Bigg\rangle.
\end{align*}
We compute each term in the above sum separately: 
\begin{align*}
\langle X, \cL\otimes \cI^{\otimes (n-1)}(X)\rangle  & = \Bigg\langle \begin{pmatrix}
A &  C\\
 C^\dagger & B
\end{pmatrix},  \begin{pmatrix}
\frac{A-B}{2} &  C\\
 C^\dagger & \frac{B-A}{2}
\end{pmatrix}\Bigg\rangle \\
& = \frac{1}{2}\Big(  \frac{1}{2}\tau(A^2) +\frac 12 \tau(B^2) - \tau(AB) + 2\tau(C^\dagger C) \Big)\\
& \geq \frac{1}{2}\Big(  \frac{1}{2}\|A\|_2^2 +\frac 12 \|B\|_2^2 - \|A\|_2\cdot \|B\|_2 + 2\|C\|_2^2 \Big)\\
& = \langle M, \cL M\rangle. 
\end{align*}
For the second term we have
\begin{align*}
\Bigg\langle \begin{pmatrix}
A &  C\\
 C^\dagger & B
\end{pmatrix},  \begin{pmatrix}
\cK_{n-1}(A) &  \cK_{n-1}(C)\\
 \cK_{n-1}(C^\dagger) & \cK_{n-1}(B)
\end{pmatrix}\Bigg\rangle & = \frac 12 \langle A, \cK_{n-1} A\rangle + \frac 12 \langle B, \cK_{n-1} B\rangle \\
&\qquad + \frac 12 \langle C, \cK_{n-1} C\rangle + \frac 12 \langle C^\dagger, \cK_{n-1} C^\dagger\rangle\\
& \geq \frac 12 \langle A, \cK_{n-1} A\rangle + \frac 12 \langle B, \cK_{n-1} B\rangle \\
&\qquad + \frac 12 \langle |C|, \cK_{n-1} |C|\rangle + \frac 12 \langle |C^\dagger|, \cK_{n-1} |C^\dagger|\rangle,
\end{align*}
where the inequality follows from Lemma~\ref{lem:2}.
Putting these in~\eqref{eq:ptt} the desired inequality follows.

\section{Proof of Theorem~\ref{thm:main-HC}}\label{sec:-HC}
Let us first assume that $X$ is positive semidefinite. Later we will show how to generalize the argument for arbitrary $X$. 
Define
$$f(t) = \|\Psi_t^{\otimes n}(X)\|_{p(t)} - \|X\|_{p_0}.$$
We need to show that $f(t)\leq 0$ for all $t\geq 0$. 
We note that $f(0)=0$. Suppose that for some $t_1>0$ we have $f(t_1)>0$. Let 
$$s_0 = \sup\{s\,|\, 0\leq s< t_1, \, f(s)=0\}.$$
By the continuity of $f(t)$ we have $f(s_0)=0$. Moreover, $f$ is positive in the interval $(s_0, t_1)$. Then, by the mean value theorem there exits $t_0\in (s_0, t_1)$ such that 
\begin{align}\label{eq:f-t0-positive}
f(t_0)\geq 0,
\end{align}
and 
$$f'(t_0) = f(t_1)/(t_1-s_0)>0.$$
By Lemma~\ref{lem:entropy-derivative} we have
$$\big\|\Psi_{t_0}^{\otimes n}\big\|_{p(t_0)}^{p(t_0)-1}\cdot f'(t_0) = \frac{p'(t_0)}{p(t_0)^2} \Ent(Y^2) - \big\langle Y^{2/p(t_0)}, \cK_n Y^{2-2/p(t_0)}     \big\rangle,$$
where $Y=\big( \Psi_{t_0}^{\otimes n}(X) \big)^{p(t_0)/2}$. Thus $f'(t_0)> 0$ implies
$$\frac{p'(t_0)}{p(t_0)^2} \Ent(Y^2) > \big\langle Y^{2/p(t_0)}, \cK_n Y^{2-2/p(t_0)}     \big\rangle.$$
Next, using the quantum Stroock-Varopoulos inequality~\cite{CKMT15, BDR20} stating that
$$\big\langle Y^{2/p(t_0)}, \cK_n Y^{2-2/p(t_0)}     \big\rangle\geq \frac{4(p(t_0)-1)}{p(t_0)^2}\langle Y, \cK_n Y\rangle, $$
we conclude that
$$\frac{p'(t_0)}{4(p(t_0)-1)} \Ent(Y^2) > \langle Y, \cK_n Y\rangle.$$
On the other hand, Theorem~\ref{thm:main-LS} gives
$$\alpha(\xi)\cdot\Ent(Y^2)\leq \langle Y, \cK_n Y\rangle, $$
where $\xi=\Ent(Y^2)/\big(n\tau(Y^2)\big)$.
Then comparing the above two inequalities, we find that
$$\alpha(\xi) <\frac {p'(t_0)}{ 4(p(t_0)-1)}.$$
Next, using $p'(t_0) = 4 u'(t_0) e^{u(4t_0)}$ and the given differential equation for $u(t)$ we have
$$\frac{p'(t_0)}{ 4(p(t_0)-1)} = \alpha\Big( \frac{r_0 p(t_0)}{p(t_0)-1}  \Big).$$
Comparing the above two inequalities and the monotonicity of $\alpha(\cdot)$ stated in Lemma~\ref{lem:phi-convex}, we obtain
\begin{align}\label{eq:8}
\xi<  \frac{r_0 p(t_0)}{p(t_0)-1}.
\end{align}

Now, part (iii) of Lemma~\ref{lem:entropy-derivative} yields
$$\frac{\Ent(Y^2)}{\tau(Y^2)} = \frac{\Ent\big(\Psi_{t_0}^{\otimes n}(X)^{p(t_0)}  \big)}{\tau\big(\Psi_{t_0}^{\otimes n}(X)^{p(t_0)}\big)}\geq  \frac{\ln \|\Psi_{t_0}^{\otimes n}(X)\|_{p(t_0)} - \ln \|\Psi_{t_0}^{\otimes n}(X)\|_1}{1-1/p(t_0)}.$$
On the other hand, by~\eqref{eq:f-t0-positive} we have $\|\Psi_{t_0}^{\otimes n}(X)\|_{p(t_0)}\geq \|X\|_{p_0}$. Moreover, $X$ and  $\Psi_{t_0}^{\otimes n}(X)$ are positive semidefinite. Hence,  
$$\|\Psi_{t_0}^{\otimes n}(X)\|_1= \tau\big(\Psi_{t_0}^{\otimes n}(X)\big) = \tau(X)=\|X\|_1.$$
Putting these together we conclude that 
$$\frac{\Ent(Y^2)}{\tau(Y^2)}\geq \frac{\ln \|X\|_{p_0} - \ln \|X\|_1}{1- {1}/{p(t_0)}}\geq \frac{nr_0}{1- 1/p(t_0)},$$
which means 
$$\xi>\frac{r_0p(t_0)}{p(t_0)-1}.$$
This is in contradiction with~\eqref{eq:8}.

Now assume that $X$ is arbitrary. Let $Z=|X|=\sqrt{X^\dagger X}$ and $Z'=|X^\dagger|=\sqrt{X X^\dagger}$. Then since $\Psi_t(\cdot)$ is completely positive, we have 
$\|\Psi_t(X)\|^2_{q}\leq \|\Psi_t(Z)\|_q\cdot \|\Psi_t(Z')\|_q$ for any $q\geq 1$. This fact in the special case of hermitian $X$ was first showed in~\cite{AH02}, and its general form can be inferred from the proof of Lemma 2 of~\cite{Watrous05}. Here we present another proof. As already shown in the proof of Lemma~\ref{lem:2} we have
$$\begin{pmatrix}
\Psi_t^{\otimes n}(Z')  &  \Psi_t^{\otimes n}(X)\\
 \Psi_t^{\otimes n}(X^\dagger) & \Psi_t^{\otimes n}(Z)
\end{pmatrix}
=\begin{pmatrix}
\Psi_t^{\otimes n}(|X^\dagger|) &  \Psi_t^{\otimes n}(X)\\
 \Psi_t^{\otimes n}(X^\dagger) & \Psi_t^{\otimes n}(|X|)
\end{pmatrix}\geq 0.$$
This means that there is a \emph{contraction} $K$ such that (see Proposition 1.3.2 of~\cite{PBhatia})
$$\Psi_t^{\otimes n}(X) = \sqrt{\Psi_t^{\otimes n}(Z') }\, K\,\sqrt{\Psi_t^{\otimes n}(Z)}.
$$
Therefore, by H\"older's inequality 
\begin{align*}
\|\Psi_t^{\otimes n}(X)\|_q &= \Big\|\sqrt{\Psi_t^{\otimes n}(Z') } K\sqrt{\Psi_t^{\otimes n}(Z)}\Big\|_q\leq \Big\|\sqrt{\Psi_t^{\otimes n}(Z') }\Big\|_{2q}\cdot \|K\|_\infty \cdot\Big\|\sqrt{\Psi_t^{\otimes n}(Z)}\Big\|_{2q}\\
& \leq \Big\|\sqrt{\Psi_t^{\otimes n}(Z') }\Big\|_{2q}\cdot \Big\|\sqrt{\Psi_t^{\otimes n}(Z)}\Big\|_{2q}\\
& = \Big\|\Psi_t^{\otimes n}(Z') \Big\|_{q}^{1/2}\cdot \Big\|\Psi_t^{\otimes n}(Z)\Big\|_{q}^{1/2}.
\end{align*}
Next, using the fact that $\|Z\|_q=\|X\|_q=\|X^\dagger\|_q=\|Z'\|_q$ we find that
 by assumption $\|Z\|_{p_0}=\|Z'\|_{p_0} =\|X\|_{p_0}\geq e^{nr_0}\|X\|_1 = e^{nr_0}\|Z\|_1=e^{nr_0}\|Z'\|_1$. Then, as we showed that the theorem holds in the case of positive semidefinite operators $Z, Z'$, we have 
\begin{align*}
\|\Psi_t^{\otimes n}(X)\|_{p(t)}&\leq \|\Psi_t^{\otimes n}(Z)\|_{p(t)}^{1/2}\cdot \|\Psi_t^{\otimes n}(Z')\|_{p(t)}^{1/2}\\
& \leq \|Z\|_{p_0}^{1/2}\cdot \|Z'\|_{p_0}^{1/2}\\
&=\|X\|_{p_0}.
\end{align*}

To prove~\eqref{eq:main-HC-weaker}, note that by Lemma~\ref{lem:phi-convex} the function $\alpha(\xi)$ is monotone non-decreasing. Therefore, $u'(t)\geq \alpha(r_0)$ for all $t$, and $u(t)\geq \alpha(r_0) + \ln(p_0-1)$. Using this in~\eqref{eq:main-HC}, and using the monotonicity of $q\mapsto \|X\|_q$, which can be proven using H\"older's inequality, we obtain~\eqref{eq:main-HC-weaker}.

\section{Final remarks}\label{sec:final}

One may wonder whether the above results can be extended to the \emph{generalized depolarizing semigroup} with the Lindblad generator $\cL'(X) = X- \tr(\eta X)I$, where $\eta$ is an arbitrary full-rank qubit state (see, e.g.,~\cite{BDR20}).  We note that both Lemma~\ref{lem:main-ent} and Lemma~\ref{lem:2}, as our main tools in proving our results, already hold for arbitrary such $\eta$'s. Thus, it seems that the induction step in the proof of Theorem~\ref{thm:main-LS} can be reproduced for arbitrary $\eta$. Nevertheless, a main issue in such a generalization is the definitions of functions $\alpha(\cdot)$ and $\varphi(\cdot)$ when $\eta$ is not the maximally mixed state. We note that, the functions $\alpha(\cdot)$ and $\varphi(\cdot)$ in the case of $\eta=I/2$ are defined in such a way that for $n=1$ we get equality in~\eqref{eq:main-LSI}. Moreover, we saw that luckily $\varphi(\cdot)$ defined in this way is \emph{convex}, and this convexity is used to execute the induction step. We may take the same path and define $\varphi(\cdot)$ for arbitrary $\eta$ by assuming equality  in~\eqref{eq:main-LSI} for $n=1$. However, even restricting to the completely classical setting (diagonal operators), such a function $\varphi(\cdot)$ would not be convex in general. An approach to overcome this obstacle is to simply replace such a $\varphi(\cdot)$ with its \emph{lower convex envelop}, i.e., the largest convex function that is point-wise smaller. We leave exploring such a generalization for future works.  

The improved log-Sobolev inequality of Theorem~\ref{thm:main-LS} is indeed a log-Sobolev inequality for the parameter $p=2$ (see, e.g.,~\cite{BDR20} for more details). It would be  interesting to extend this result to other values of $p$, as done in~\cite{PSamor} for the classical case. In particular, if we could optimally generalize Theorem~\ref{thm:main-LS} to the case of $p=1$ (an improved quantum modified log-Sobolev inequality), we would be able to prove a quantum version of \emph{Mrs.\ Gerber's Lemma} through the approach of~\cite{PSamor}. As Mrs.\ Gerber's Lemma has several applications in information theory, a quantum version would also be desirable and would find applications. 
We note that as done in Section~\ref{sec:application}, the quantum Stroock-Varopoulos inequality and Theorem~\ref{thm:main-LS} provide us with an improved modified log-Sobolev inequality, but not an optimal one as in the case of $p=2$ (see Remark~\ref{rem:optimal}).  To obtain such an optimal inequality we should take a direct approach. To this end, a main difficulty is to replace the inequality of Lemma~\ref{lem:main-ent} to something relevant to the case of $p=1$.

\appendix

\section{Bounding the entropy in terms of $2$-norm}\label{app:rho}

\begin{lemma}\label{lem:rho}
For every positive semidefinite $X\in \bL(\cH^{\otimes n})$ we have
$$\xi= \frac{\Ent(X^2)}{n\, \tau(X^2)}\in [0, \ln 2].$$
\end{lemma}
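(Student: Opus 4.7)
The plan is to recognize $\xi$ as a rescaled quantum relative entropy to the maximally mixed state, and then apply standard dimensional bounds on von Neumann entropy.

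First I would normalize. Assuming $X\neq 0$ (the $X=0$ case is trivial since both sides of the inequality are $0$ under the convention $0/0=0$), let $\mu = \tau(X^2)>0$ and define the density matrix $\rho = X^2/\tr(X^2) = X^2/(2^n\mu)$. A direct computation using $X^2\ln X^2 = 2^n\mu\,\rho\bigl(\ln(2^n\mu)+\ln\rho\bigr)$ gives
\begin{align*}
\Ent(X^2) &= \tau(X^2\ln X^2) - \tau(X^2)\ln\tau(X^2) \\
&= \mu\ln(2^n\mu) + \mu\tr(\rho\ln\rho) - \mu\ln\mu \\
&= \mu\bigl(n\ln 2 - H(\rho)\bigr),
\end{align*}
where $H(\rho) = -\tr(\rho\ln\rho)$ is the von Neumann entropy of $\rho$. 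This matches the remark in the introduction that $\Ent$ is related to relative entropy against the maximally mixed state.

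Dividing by $n\tau(X^2) = n\mu$ yields the clean identity
$$\xi = \frac{\Ent(X^2)}{n\,\tau(X^2)} = \ln 2 - \frac{H(\rho)}{n}.$$
The bound $\xi\in[0,\ln 2]$ is then equivalent to $0\leq H(\rho)\leq n\ln 2$. The lower bound $H(\rho)\geq 0$ holds because $\rho$ is a density matrix (so its eigenvalues lie in $[0,1]$ and $-x\ln x\geq 0$ on $[0,1]$). The upper bound $H(\rho)\leq \ln(\dim\mathcal{H}^{\otimes n}) = n\ln 2$ is the standard maximum-entropy bound, achieved uniquely by the maximally mixed state and proved by Jensen's inequality applied to the concave function $-x\ln x$ on the eigenvalues of $\rho$.

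There is no real obstacle here; the only thing to be slightly careful about is the boundary case where $X^2$ is rank-deficient, but the convention $0\ln 0 = 0$ handles this and the identity above remains valid. The statement follows immediately.
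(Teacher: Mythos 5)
Your argument is correct and takes a genuinely different and simpler route than the paper's. The paper proves Lemma~\ref{lem:rho} by induction on $n$: the base case $n=1$ is handled by a direct two-eigenvalue computation, and the inductive step reuses the King-type block-matrix entropy bound of Lemma~\ref{lem:main-ent} together with the trace identities $\tau(X^2) = \tau(M^2) = \tfrac12\tau(A^2)+\tfrac12\tau(B^2)+\tau(|C|^2)$. You instead observe the identity $\Ent(X^2) = \tau(X^2)\bigl(n\ln 2 - H(\rho)\bigr)$ with $\rho = X^2/\tr(X^2)$ — i.e.\ $\Ent(X^2)/\tau(X^2)$ is exactly the relative entropy $D(\rho\,\|\, I/2^n)$, a fact the paper already flags in passing after defining $\Ent$ — and then apply the standard dimensional bound $0\le H(\rho)\le n\ln 2$. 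Your computation checks out, including the convention $0\ln 0 = 0$ for rank-deficient $X$. What your proof buys is economy and generality: it avoids the induction entirely, does not depend on Lemma~\ref{lem:main-ent}, and works verbatim for any finite-dimensional local Hilbert space, not just qubits. What the paper's route buys is stylistic uniformity: it rehearses in a simple setting the same decomposition-and-induction machinery used in the proof of Theorem~\ref{thm:main-LS}, where a direct entropy argument is not available.
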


\begin{proof}
We need to show that
$$\Ent(X^2)\leq n \ln(2) \tau(X^2).$$
We prove this by induction on $n$.  First, let $n=1$. In this case, both $\Ent(X^2)$ and $\tau(X^2)$ depend only on the eigenvalues of $X^2$. Next, 
by a scaling argument we can assume with no loss of generality that the eigenvalues of $X^2$ are $1\pm r$ for some $r\in[0,1]$. Then $\tau(X^2)=1$ and we have
\begin{align*}
\Ent(X^2) &= \frac{1}{2}\big( (1+r)\ln(1+r) +(1-r)\ln(1-r)  \big) \\
&\leq \frac 12 (1+r)\ln(1+r)\\
&\leq \ln 2.
\end{align*}

Now suppose that $n>1$, and that $X$ is of the block form~\eqref{eq:X} with $A, B, C\in \bL\big(\cH^{\otimes (n-1)}\big)$.
By Lemma~\ref{lem:main-ent} we have
$$\Ent(X^2)\leq \Ent(M^2) + \frac 1 2 \Ent(A^2) +\frac 1 2 \Ent(B^2) + \Ent(|C|^2).$$
On the other hand, a simple computation shows that
\begin{align*}
\tau(X^2) = \tau(M^2) = \frac 12\tau(A^2) + \frac 12\tau(B^2) + \tau(|C|^2),
\end{align*}
and 
\begin{align*}
n\tau(X^2) = \tau(M^2) + \frac{n-1}{2}\tau(A^2) + \frac{n-1}{2}\tau(B^2) + (n-1)\tau(|C|^2).
\end{align*}
Using the induction hypothesis for $M, A, B, |C|$, and the above equations, the desired inequality follows.

\end{proof}


{\footnotesize
{~~}
}


\begin{thebibliography}{~~}


\bibitem{AG76}
R.~{Ahlswede} and P.~{Gacs}, ``Spreading of sets in product spaces and hypercontraction of the Markov operator," Ann. Probab. {\bf 4}(6), 925--939 (1976).


\bibitem{AH02}
G.~G.~Amosov and A.~S.~Holevo, ``On the multiplicativity hypothesis for quantum communication channels," Theory Probab. Appl. {\bf 47}(1), 123--127 (2020).


\bibitem{Bardetetal19} I.~Bardet, A.~Capel, A. Lucia, D.~P\'erez-Garc\'ia, and C.~Rouz\'e,  ``On the modified logarithmic Sobolev inequality for the heat-bath dynamics for 1D systems," J. Math. Phys. {\bf 62(6)}, 061901 (2021).



\bibitem{BJLRS21} I.~Bardet, M.~Junge, N.~LaRacuente, C.~Rouz\'e, and D.~S.~Fran\c{c}a,  ``Group transference techniques for the estimation of the decoherence times and capacities of quantum Markov semigroups," IEEE Trans. Inf. Theory {\bf 67}(5), 2878-2909, (2021).



\bibitem{Beckner} W. Beckner, ``Inequalities in Fourier analysis," Ann. of Math. {\bf 102}(1), 159-182 (1975).

\bibitem{BDR20} S.~Beigi, N.~Datta, C.~Rouz\'e, ``Quantum reverse hypercontractivity: its tensorization and application to strong converses," Commun. Math. Phys.  {\bf 376}(2), 753-94 (2020).



\bibitem{MBhatia} R. Bhatia, \emph{Matrix Analysis} (Graduate Texts in Mathematics 169, Springer, 1997).

\bibitem{PBhatia} R. Bhatia, \emph{Positive definite matrices} (Princeton university press, 2009).


\bibitem{Bonami} A. Bonami, ``\'Etude des coefficients de Fourier des fonctions de $L_p(G)$," Ann. Inst. Fourier {\bf 20}(2), 335-402 (1970).


\bibitem{BLM13} S.~{Boucheron}, G.~{Lugosi}, and P.~{Massart}, \emph{Concentration Inequalities: A Nonasymptotic Theory of  Independence} (Oxford University Press, 2013).



\bibitem{CDR19} H.C.~Cheng, N.~Datta, and C.~Rouz\'e, ``Strong converse for classical-quantum degraded broadcast channels," arXiv preprint arXiv:1905.00874 (2019).


\bibitem{CKMT15} T. Cubitt, M. Kastoryano, A. Montanaro, K. Temme, ``Quantum reverse hypercontractivity," J. Math. Phys. {\bf 56}(10), 102204 (2015).

\bibitem{DB14} P. Delgosha and S. Beigi, ``Impossibility of Local State Transformation via Hypercontractivity," Commun. Math. Phys. {\bf 332}, 449-476 (2014).

\bibitem{deWolf08} R. de Wolf, ``A Brief Introduction to Fourier Analysis on the Boolean Cube," Theory of Computing, Graduate Surveys {\bf 1}, 1-20 (2008).


\bibitem{DSC96} P.~Diaconis, L.~Saloff-Coste, ``Logarithmic Sobolev inequalities for finite Markov chains," Ann. Appl. Probab. {\bf 6}(3) 695-750 (1996).

\bibitem{FT05} J.~Friedman and J.~P.~Tillich, ``Generalized Alon-Boppana Theorems and Error-Correcting Codes," SIAM J. Discrete Math. {\bf 19}(3), 700-718 (2005).



\bibitem{GL10} N.~{Gozlan} and C.~{L\'eonard}, ``Transport inequalities, a survey," Markov Process. Relat. Fields, {\bf 16}, 635-736 (2010).



\bibitem{G75b} L.~Gross,  ``Logarithmic Sobolev inequalities," Am. J. Math. {\bf 97}(4), 1061-1083 (1975).





\bibitem{KA12} S. Kamath, and V. Anantharam, ``Non-interactive simulation of joint distributions: The Hirschfeld-Gebelein-R\'enyi maximal correlation and the hypercontractivity ribbon," In 2012 50th Annu. Allert. Conf. Commun. Control Comput. (Allerton), 1057-1064 (2012).



\bibitem{KT2} M.~J.~Kastoryano, F.~Pastawski and K.~Temme, ``Hypercontractivity of quasi-free quantum semigroups," J. Phys. A: Math. Theor. {\bf 47}(40),  405303 (2014).


\bibitem{KT13} M. Kastoryano, and K. Temme, ``Quantum logarithmic Sobolev inequalities and rapid mixing," J.
Math. Phys. {\bf 54}(5), 052202 (2013).




\bibitem{King03} C. King, ``Inequalities for Trace Norms of $2 \times 2$ Block Matrices," Commun. Math. Phys. {\bf 242}(3)
 531-545 (2003).


\bibitem{King14} C. King, ``Hypercontractivity for semigroups of unital qubit channels," Commun. Math. Phys.
{\bf 328}(1), 285-301 (2014).

\bibitem{MO10} A. Montanaro and T. Osborne, ``Quantum Boolean Functions," Chicago J. Theor. Comput. Sci. {\bf 2010}(1), (2010).

\bibitem{MHFW16} A. M\"uller-Hermes, D. Stilck Franca and M. M. Wolf, ``Entropy production of doubly stochastic quantum channels," J. Math. Phys. {\bf 57}(2), 022203 (2016)




\bibitem{Nelsen} E. Nelson, ``A quartic interaction in two dimensions," In Mathematical Theory of Elementary Particles, MIT Press, 69-73 (1966).



\bibitem{Zegar} R. Olkiewicz and B. Zegarlinski, ``Hypercontractivity in Noncommutative $L_p$ Spaces," J. Funct. Anal. {\bf 161}(1), 246-285 (1999).

\bibitem{PSamor} Y. Polyanskiy, A Samorodnitsky, ``Improved log-Sobolev inequalities, hypercontractivity and uncertainty
principle on the hypercube," J. Funct. Anal. {\bf 277}(11), 108280 (2019).



\bibitem{RS13} M.~Raginsky and I.~Sason, ``Concentration of measure inequalities in information theory, communications, and coding," Found. Trends{\textregistered} Commun. Inf. Theory {\bf 10}(1-2), 1-246 (2013).


\bibitem{Samor} S. Samorodnitsky, ``A modified logarithmic Sobolev inequality for the Hamming cube and some applications," arXiv preprint arXiv:0807.1679 (2008).



\bibitem{SH-K72} B. Simon and R. Hoegh-Krohn, ``Hypercontractive semigroups and two dimensional self-coupled Bose fields," J. Funct. Anal. {\bf 9}(2), 121-180 (1972).


\bibitem{Watrous05} J.~Watrous, ``Notes on super-operator norms induced by Schatten norms," Quantum Inf. Comput. {\bf 5}(1), 58-68 (2005).




\end{thebibliography}
\end{document}